\documentclass[conference, 10pt, twocolumn]{ieeeconf}

\IEEEoverridecommandlockouts
\overrideIEEEmargins
 
\usepackage[usenames]{color}
\usepackage{enumerate}
\usepackage{url}
\usepackage{subfigure}
\usepackage{amsfonts,mathrsfs}
\usepackage{amssymb,amsmath}
\usepackage{verbatim}
\usepackage{acronym}
\usepackage{mathtools}
\usepackage{graphicx}

%\newcommand{\remove}[1]{}
%\def\jnt#1{{{\bf #1}}}

%\long\def\red#1{{{\color{red}#1}}}

\def\fskip#1{}

\newtheorem{theorem}{Theorem}

\newtheorem{definition}{Definition}
\newtheorem{example}{Example}

\newtheorem{lemma}{Lemma}

\newtheorem{proposition}[theorem]{Proposition}
\newtheorem{remark}{Remark}

\def\1{{\bf 1}}

\def\R{\mathbb{R}}

\newcommand{\remove}[1]{}

\DeclarePairedDelimiter\ceil{\lceil}{\rceil}

\begin{document}

\title{Complexity of Equilibrium in Diffusion Games on Social Networks*}
\author{\authorblockN{Seyed Rasoul Etesami, Tamer Ba\c{s}ar}
  \authorblockA{Coordinated Science Laboratory, University of Illinois at Urbana-Champaign,  Urbana, IL 61801\\
     Email: (etesami1, basar1)@illinois.edu}
\thanks{*Research supported in part by the ``Cognitive \& Algorithmic Decision Making" project grant through the College of Engineering of the University of Illinois, and in parts 
by AFOSR MURI Grant FA 9550-10-1-0573 and NSF grant CCF 11-11342. An earlier version of this paper, dealing with only a small subset of the class of problems addressed here, was presented at the 2014 American Control Conference (ACC), and is listed as \cite{etesami2014complexity} in the References section.}
}
\maketitle
\begin{abstract}
In this paper, we consider the competitive diffusion game, and study the existence of its pure-strategy Nash equilibrium when defined over general undirected networks. We first determine the set of pure-strategy Nash equilibria for two special but well-known classes of networks, namely the lattice and the hypercube. Characterizing the utility of the players in terms of graphical distances of their initial seed placements to other nodes in the network, we show that in general networks the decision process on the existence of pure-strategy Nash equilibrium is an NP-hard problem. Following this, we provide some necessary conditions for a given profile to be a Nash equilibrium. Furthermore, we study players' utilities in the competitive diffusion game over Erdos-Renyi random graphs and show that as the size of the network grows, the utilities of the players are highly concentrated around their expectation, and are bounded below by some threshold based on the parameters of the network. Finally, we obtain a lower bound for the maximum social welfare of the game with two players, and study sub-modularity of the players' utilities.   
\end{abstract}
\begin{keywords}

Competitive diffusion game, pure-strategy Nash equilibrium, sub-modular function, NP-hardness, Erdos-Renyi graphs, social welfare.

\end{keywords}  

\section{Introduction}
In recent years, there has been a wide range of studies on the role of social networks in various disciplinary areas. In particular, availability of large data from online social networks has drawn the attention of many researchers to model the behavior of agents in a social network using the possible interactions among them \cite{jadbabaie2003coordination,goyal2012competitive,bharathi2007competitive}. One of the widely studied models in social networks is the diffusion model, where the goal is to propagate a certain type of product or behavior in a desired way through the network \cite{goyal2012competitive}, \cite{young2002diffusion}, \cite{acemoglu2011diffusion} and \cite{kempe2003maximizing}. Other than applications in online advertising for companies' products, such a model has applications in epidemics and immunization v.s. virus spreading \cite{khanafer2014infected}, \cite{khanafer2014informationspread}. 
One of the challenges in such models has been obtaining the solution to the best seed placement problem, which has been extensively studied for different processes \cite{yildiz2011discrete}, \cite{fazli2012non}, \cite{ackerman2010combinatorial} and \cite{gionis2013opinion}.

In many of the applications in social networks, it is natural to have more than one party that wants to spread information on his own products. This imposes a sort of competition among the providers who are competing for the same set of resources and their goal is to diffuse information on their own product in a desired way across the society. Such a competition can be modeled within a game theoretic framework \cite{basar1999dynamic}, and hence, a natural question one can ask is characterization of the set of equilibria of such a game. Several papers in the literature have in fact addressed this question in different settings, with some representative ones being \cite{goyal2012competitive}, \cite{ghaderi2013opinion}, \cite{bharathi2007competitive}, \cite{branzei2011social}, \cite{richardson2002mining} and \cite{singer2012win}. Our goal in this paper is to expand on this literature by addressing the issue of complexity of ascertaining the existence of Nash equilibria for some of these models as well as other models introduced here, as described below. 

Due to the complex nature of social events which might be woven with rational decisions, one can find various models aimed at capturing the idea of competition over social networks. One of the models that describes such a competitive behavior in networks is known as the competitive diffusion game \cite{alon2010note}. This model can be seen as a competition between two or more parties (types) who initially select a subset of nodes to invest, and the goal for each party is to attract as many social entities to his or her own type as possible. It was shown earlier \cite{alon2010note} that in general such games do not admit pure-strategy Nash equilibria. It has been shown in \cite{takehara2012comment} that such games may not even have a pure-strategy Nash equilibrium on graphs of small diameter. In fact, the authors in \cite{takehara2012comment} have shown that for graphs of diameter 2 and under some additional assumptions on the topology of the network, the diffusion game admits a general potential function and hence an equilibrium. However checking these assumptions at the outset for graphs of diameter at most 2 does not seem to be realistic. 

One of the advantages of the diffusion game model is that it captures the simple fact that being closer to player's initial seeds will result in adopting that specific player's type. Moreover, the adoption rule which is involved in the diffusion game is quite simple such that it enables each player to compute its best response quite fast with respect to others (at least for the case of single seed placement), given that all the other players have fixed their actions. On the other hand, as we will see in this paper what makes the analysis of such games more complicated is the behavior of nodes which are equally distanced from the players' seeds. Although there were some recent attempts to characterize these boundary points and show the existence of pure-strategy Nash equilibrium of the diffusion game over different types of networks \cite{small2013nash,small2012information,alon2010note}, in this work we will address this issue in a more general form, and show that finding an equilibrium for diffusion games is an NP-hard problem over general networks. Therefore, unless P=NP, this strongly suggests that in general the complexity of analyzing such a diffusion game is a hard task despite its simple adoption rule. It requires additional relaxations in the structure of the game in order to  make it more tractable. As one possible approach one may consider a probabilistic version of the diffusion game using some techniques from Markov chains or optimization of harmonic influence centrality \cite{acemoglu2013opinion}, \cite{vassio2014message}, \cite{acemoglu2014harmonic}. However, in this work we take a different approach by considering the diffusion game over the well-known Erdos-Renyi random graphs which are commonly used in the literature in order to model social networks.

In a related recent earlier work \cite{etesami2014complexity}, we have characterized the utilities of the players based on the graphical distances of various nodes from the initial seeds. In particular, we have studied the complexity of deciding on the existence of a pure-strategy Nash equilibrium. Here, we characterize the equilibria set for some classes of well-studied networks, and explore some connections between the set of equilibria and the underlying network structure. In particular, we provide some necessary conditions for a given profile of strategies to constitute a Nash equilibrium. Moreover, we consider the diffusion game over Erdos-Renyi graphs and prove some concentration results related to utilities of the players over such networks. Finally we provide a lower bound for the optimal social welfare of the diffusion game over general static networks based on their adjacency matrix.

The paper is organized as follows: in Section~\ref{sec:diffusion-game-model}, we describe the competitive diffusion game and review some of its properties and existing results regarding this model. In Section \ref{sec:specific-networks}, we determine the set of equilibria of two special but well-studied networks, namely the \textit{lattice} and the \textit{hypercube}. In Section~\ref{sec:NP-hardness}, we characterize the utilities of the agents based on the relative locations of the players' initial seed placements, and show that, the decision process on the existence of pure-strategy Nash equilibrium over general undirected networks is an NP-hard problem. In Section~\ref{sec:necessary}, we provide two necessary conditions based on the network structure for a given profile to be a Nash equilibrium.  In Section \ref{sec:diffusion-game-random-graphs}, we consider the diffusion game model over random graphs and show that asymptotically the utility of the players are highly concentrated around their mean. Furthermore, we provide a lower bound for the expected utility of the players based on the parameters of the random graphs. We end the paper with the concluding remarks of Section~\ref{sec:conclusion}. Finally, in Appendix \ref{ap-social-welfare-sub-modularity}, we provide some complementary results related to sub-modularity as well as lower optimal social welfare of the diffusion game over general fixed networks, which can be used to obtain bounds for the price of anarchy of diffusion games whenever an equilibrium exists.

\textbf{Notations and Conventions}: 
For a positive integer $n$, we let $[n]:=\{1,2,\ldots,n\}$. For a vector $v\in \R^n$, we let $v_i$ be the $i$th entry of $v$. Similarly, for a matrix $P$, we let $P_{ij}$ be the $ij$th entry of $P$ and we denote the $i$th row of $P$ by $P_i$. We denote the transpose of a matrix $P$ by $P'$. Moreover, we let $I$ and $\bold{1}$ be, respectively, the identity matrix and the column vector of all ones of proper dimensions. Given an integer $k>0$, we denote the set of all $k$-tuples of integers by $\mathbb{Z}^k$. For any two vectors $u,v\in Z^k$, we let $u\oplus v$ be their sum vector in mod 2, i.e., $(u\oplus v)_i=u_i+v_i \ \mbox{mod} 2$, for all $i=1,\ldots,k$. We let $\mathcal{G}=(V, \mathcal{E})$ to be an undirected graph with the set of vertices $V$ and the set of edges $\mathcal{E}$. We denote the degree of a vertex $x$ in graph $\mathcal{G}$ by $\mathbf{d}(x)$. Corresponding to $\mathcal{G}$ we let $\mathcal{A}_{\mathcal{G}}$ to be its adjacency matrix, i.e. $\mathcal{A}_{\mathcal{G}}(i,j)=1$ if and only if $(i,j)\in \mathcal{E}$ and $\mathcal{A}_{\mathcal{G}}(i,j)=0$, otherwise. Given a graph $\mathcal{G}=(V, \mathcal{E})$ and two vertices $x, y\in V$, we define $d_{\mathcal{G}}(x,y)$ to be the length of the shortest graphical path between $x$ and $y$. Also, for a set of vertices $S\subseteq V$ and a vertex $x$, we let $d_{\mathcal{G}}(x,S)=\min_{y\in S}\{d_{\mathcal{G}}(x,y)\}$. For a real number $a$ we let $\ceil{a}$ to be the smallest integer greater than or equal to $a$. We deal in this paper with only pure-strategy Nash equilibrium, and occasionally we will drop the qualifier ``pure-strategy".

\smallskip
\section{Competitive Diffusion Game}\label{sec:diffusion-game-model}

In this section we introduce the competitive diffusion game as was introduced earlier in \cite{alon2010note} and state some of the existing results for such a model. For sake of simplicity, and without much loss of conceptual generality, we state the model when there are only two players in the game; however, the model and analysis can readily be extended to the case when there are more than two players. 

\smallskip  
\subsection{Game Model} 
Following the formulation in \cite{alon2010note}, we consider here a network $\mathcal{G}$ of $n$ nodes and two players (types) $A$ and $B$. Initially at time $t=0$, each player decides to choose a subset of nodes in the network and place his own seeds. After that, a discrete time diffusion process unfolds among uninfected nodes as follows:
\begin{itemize}
\item If at some time step $t$ an uninfected node is neighbor to infected nodes of only one type ($A$ or $B$), it will adopt that type at the next time step $t+1$. 
\item If an uninfected node is connected to nodes of both types at some time step $t$, it will change to a gray node at the next time step $t+1$ and does not adopt any type afterward. 
\end{itemize}
This process continues until no new adoption happens. Finally, the utility of each player will be the total number of infected nodes of its own type at the end of the process. Moreover, if both players place their seeds on the same node, that node will change to gray. We want to emphasize the fact that when a node changes to gray, not only will it not adopt any type at the next time step, but also may block the streams of diffusion to other uninfected nodes. We will see later that the existence of gray nodes in the evolution of the process can make any prediction process about the outcome of the diffusion process much more complicated.  

\smallskip
\begin{remark} 
The diffusion process as defined above is a particular case of progressive diffusion processes, where the state of the nodes does not change after adoption. This is in contrast to other types of processes known as non progressive processes \cite{fazli2012non}.
\end{remark}   

\smallskip 
\begin{remark} 
For the case of $k>2$ players, one can define the same process as above such that an uninfected node will adopt type $i$ at time $t+1$ if and only if type $i$ is the only existing type among its neighbors at time step $t$.       
\end{remark}

\smallskip
As mentioned earlier, it has been shown in \cite{alon2010note}, \cite{small2012information} and \cite{takehara2012comment} that competitive diffusion game may or may not admit pure-strategy Nash equilibria depending on the topology of the network $\mathcal{G}$, and the number of the players. Moreover, it has been shown in \cite{small2013nash} that if the underlying graph $\mathcal{G}$ has a tree structure, the diffusion game with two players admits a pure-strategy Nash equilibrium. In fact, for the case of three or more players even the tree structure may not have a pure Nash equilibrium. In the next sections we will study some of the properties of the diffusion process over specific and general networks and in particular some conditions which are necessary for the existence of at least one equilibrium.  

\bigskip
\section{Diffusion Game over Specific Networks}\label{sec:specific-networks}

\smallskip
In this section, we consider the 2-player diffusion game with a single seed placement and study the existence of a pure-strategy Nash equilibrium for two special but well-studied classes of networks, namely the \textit{lattice} and the \textit{hypercube}. Such an analysis sheds light into the problem under more general settings, which is the topic of the next section.

\smallskip
\begin{definition}
An $m\times n$ lattice is a graph $L_{m\times n}$ with vertex set $V=\{(x,y)\in \mathbb{Z}^2: 0\leq x\leq m, 0\leq y\leq n\}$ such that each node is adjacent to those whose Euclidean distanced is 1 from it. A $k$ dimensional hypercube is a graph $Q_k$ with vertex set $\{0,1\}^k$ such that two $k$-tuples are adjacent if and only if they differ in exactly one position. 
\end{definition}

\smallskip
\begin{proposition}\label{prop:Lattice}
For the 2-player diffusion game over the lattice $L_{m\times n}, m,n\in Z^+$, a profile $(a^*,b^*)$ is a Nash equilibrium if and only if $a^*$ and $b^*$ are adjacent nodes in the most centric square or edges of the $L_{m\times n}$.   
\end{proposition}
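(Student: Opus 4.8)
The plan is to first collapse the dynamic diffusion process into a static geometric counting problem, and then solve the resulting competitive-location game. Since in $L_{m\times n}$ each vertex is adjacent exactly to its axis neighbors, the graph distance coincides with the $\ell^1$ (Manhattan) distance, $d_{\mathcal{G}}((x,y),(a_1,a_2))=|x-a_1|+|y-a_2|$. With a single seed each, the $A$-wave and $B$-wave expand at unit speed, so a vertex $v$ is first reached by type $A$ at time $d_{\mathcal{G}}(v,a^*)$ and by type $B$ at time $d_{\mathcal{G}}(v,b^*)$, provided no gray vertex blocks the relevant geodesic. The first step is to show this proviso is automatic on a lattice: the set $C_A:=\{v: d_{\mathcal{G}}(v,a^*)<d_{\mathcal{G}}(v,b^*)\}$ is reachable from $a^*$ by coordinate-monotone geodesics that stay inside $C_A$, so equidistant (gray) vertices never sever a strictly-closer vertex from its source. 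Consequently $U_A(a^*,b^*)=|C_A|$ and, symmetrically, $U_B(a^*,b^*)=|\{v: d_{\mathcal{G}}(v,b^*)<d_{\mathcal{G}}(v,a^*)\}|$, with equidistant vertices turning gray. The game thus reduces to: each player places a point so as to maximize the number of lattice vertices strictly closer to it than to the opponent's point.

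Next I would establish \textbf{necessity of adjacency}. Fixing $b^*$, one analyzes $A$'s best response through the bisector $\{v: d_{\mathcal{G}}(v,a)=d_{\mathcal{G}}(v,b^*)\}$, which for the $\ell^1$ metric cuts $V$ into the two closer-regions plus a gray band. The claim is that $A$'s best response is always to sit on a \emph{neighbor} of $b^*$: if $d_{\mathcal{G}}(a,b^*)\ge 2$, relocating $a$ to the neighbor of $b^*$ lying on the side that captures the larger half-plane shifts the bisector flush against $b^*$ and strictly enlarges $C_A$. Hence at any equilibrium $a^*$ and $b^*$ must be adjacent; since lattice adjacency means differing in exactly one coordinate by $1$, the two seeds always share a coordinate, the bisector degenerates to a single clean axis-aligned cut midway between them, and no gray vertices arise.

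For \textbf{necessity of centrality and sufficiency}, I would restrict to adjacent pairs and compute $|C_A|,|C_B|$ explicitly: if $a^*,b^*$ differ in $x$ (sharing $y=y_0$), the cut is vertical and $A$ captures a block of full columns, of size $(a_1{+}1)(n{+}1)$, while $B$ captures the complementary columns. The equilibrium then imposes two balance conditions, each from a profitable deviation when violated. First, along the differing coordinate the cut must straddle the grid's center, for otherwise the player holding fewer columns can jump to the opposite neighbor of its rival (reversing the cut) and gain. Second, the shared coordinate $y_0$ must itself be central, for otherwise a player can rotate the adjacency into the perpendicular direction and capture the larger half-plane in the other axis. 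Together these force centering in both coordinates, which is exactly the description ``adjacent nodes in the most centric square or edges,'' the precise shape (a central vertex's incident edge, a central edge, or an edge of the central unit square) being fixed by a short case analysis on the parities of $m$ and $n$. Sufficiency is the converse: for such a centered adjacent pair, the same column/row count shows that every alternative placement (the three other neighbors of the rival, or any non-adjacent point) yields the deviator at most its balanced central share, so no unilateral deviation strictly helps and the profile is a Nash equilibrium.

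The step I expect to be the main obstacle is controlling the two-dimensional $\ell^1$ geometry \emph{off} the equilibrium set. Unlike on a line, the Manhattan bisector for a non-adjacent pair is a piecewise axis/diagonal curve bounding a nonempty gray band, so both the no-blocking claim of the reduction and the strict monotonicity of $|C_A|$ as $a$ is relocated toward $b^*$ require careful layer-by-layer combinatorial bookkeeping rather than a one-line midpoint argument. In addition, each parity case (vertex-, edge-, or square-centered) must be verified separately to confirm exactly which adjacent pairs survive and that no degenerate tie lets a player escape the balanced split.
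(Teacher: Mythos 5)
Your high-level strategy (collapse the diffusion to $\ell^1$-distance counting, then force adjacency and centrality via deviation arguments) is the same route the paper takes, but your reduction asserts an equality that is actually false, and everything downstream leans on it. You claim that on the lattice every equidistant vertex turns gray, so that $U_A(a^*,b^*)=|C_A|$ exactly. Counterexample: on $L_{4\times 4}$ place the seeds at $a^*=(0,0)$ and $b^*=(2,2)$. At $t=1$ the nodes $(1,0),(0,1)$ adopt $A$ and $(1,2),(2,1),(3,2),(2,3)$ adopt $B$; at $t=2$ the equidistant nodes $(2,0),(1,1),(0,2)$ turn gray, while $(3,1)$ adopts $B$. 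Now look at $(3,0)$, which satisfies $d_{\mathcal{G}}((3,0),a^*)=d_{\mathcal{G}}((3,0),b^*)=3$: at $t=2$ its only infected neighbor is $(3,1)$ of type $B$ (its neighbor $(2,0)$ is gray, and gray is not a type), so at $t=3$ it adopts $B$. In this run $B$ captures all of the equidistant vertices $(3,0),(4,0),(0,3),(0,4)$, ending with $U_B=19$ even though $|C_B|=15$. So the gray band does not absorb ties; who wins a tie depends on the blocking geometry. The correct tool is only the two-sided containment of the paper's Lemma~\ref{lemma:gray-eq}: $\{v:d_{\mathcal{G}}(v,a)<d_{\mathcal{G}}(v,b)\}\subseteq\mathcal{N}_A\subseteq\{v:d_{\mathcal{G}}(v,a)\le d_{\mathcal{G}}(v,b)\}$.

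This is a genuine gap because every utility comparison in your plan is computed with the false exact formula. For necessity you must show that the deviator's \emph{actual} current utility, which can exceed $|C_A|$ by as many as all the ties (a set that can have size of order $n$), is strictly beaten by the deviation; for sufficiency you must show that every deviation from a centered adjacent pair has \emph{actual} utility, ties included, no larger than the equilibrium payoff. Comparing strict counts proves neither. The argument is repairable along the lines you sketch: adjacent profiles are tie-free (the lattice is bipartite and the seeds are at odd distance), so the utility of any deviation \emph{to a neighbor of the opponent} is an exact column/row count, and one can then compare that exact count against the closed-region upper bound $|\{v:d_{\mathcal{G}}(v,a)\le d_{\mathcal{G}}(v,b^*)\}|$ of the profile being ruled out. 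But as written, your step~1 is not a hard-to-verify bookkeeping step, as you suggest at the end; it is a false statement. A small secondary point: on odd-by-odd lattices the central equilibria are not ``balanced'' (e.g., on $L_{4\times4}$ the pair $((2,2),(2,1))$ is an equilibrium with payoffs $15$ and $10$), so the sufficiency argument must compare each deviator against its own, generally unequal, equilibrium share rather than a common balanced split.
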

\begin{proof}
Let us assume that $a^*=(x_1,y_1)$ and $b^*=(x_2,y_2)$ are two nonadjacent nodes in $L_{m\times n}$ which form a pure-strategy Nash equilibrium. Without any loss of generality, and if necessary by rotating the lattice $L_{m\times n}$ and relabeling the types, we can assume that $x_1>x_2$ and $y_1\ge y_2$. This situation has been shown in the Figure \ref{fig:Lattice}. Now, it is not hard to see that at least one of the following cases will happen:
\begin{itemize}
\item Player $B$ can strictly increase her utility by deviating to either $(x_1-1,y_1)$ or $(x_1,y_1-1)$.
\item Player $A$ can strictly increase her utility by deviating to either $(x_2+1,y_2)$ or $(x_2,y_2+1)$.
\end{itemize}
Therefore, in any of these cases, it means that $(a^*,b^*)$ cannot be an equilibrium. Moreover, if two adjacent nodes are not in the most centric part of the lattice, it can be seen that one of the players can always increase her utility by deviating to another neighbor of the other player while moves closer to the center of the lattice. This shows that if a profile is a Nash equilibrium it must be two adjacent nodes in the most centric square of the lattice $L_{m\times n}$. Finally, using the same line of argument as above, it is not hard to see that in every such profile each of the players will obtain the maximum utility that she can get, given that the position of the other player is fixed. This shows that indeed any adjacent profile in the most centric square is a Nash equilibrium. 
\vspace{-0.3cm}
\begin{figure}[htb] 
\vspace{-4cm}
\begin{center}
\includegraphics[totalheight=.3\textheight,
width=.55\textwidth,viewport=-120 0 500 550]{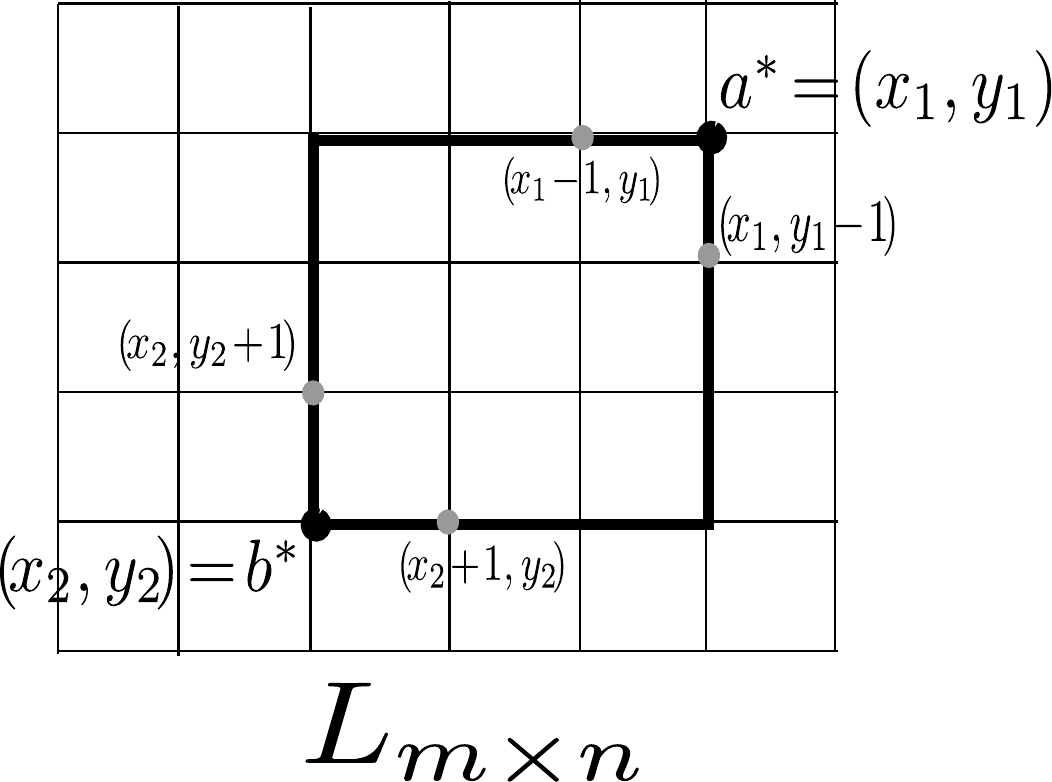}
\end{center}\vspace{-0.2cm}\caption{Illustration of Lattice in Proposition \ref{prop:Lattice}.}\label{fig:Lattice}
\end{figure}      
\end{proof}

\begin{proposition}\label{pro:hypercube}
A profile $(a^*,b^*)$ is a Nash equilibrium of the 2-player diffusion game over the hypercube $Q_k$ if and only if the graphical distance between $a^*$ and $b^*$ is an odd number, or equivalently $(a^*\oplus b^*)'\bold{1}_k=1\ \mbox{mod} \ 2$. 
\end{proposition}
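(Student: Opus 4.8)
The plan is to reduce the diffusion dynamics on $Q_k$ to a purely combinatorial nearest-seed rule and then settle the equilibrium question by a counting argument built on a symmetry of the hypercube. Writing $d_A(v)=d_{\mathcal G}(v,a^*)$ and $d_B(v)=d_{\mathcal G}(v,b^*)$ for the Hamming distances to the seeds, I would first establish the following \emph{distance rule}: when $a^*\neq b^*$, a vertex $v$ is colored $A$ exactly when $d_A(v)<d_B(v)$, is colored $B$ exactly when $d_B(v)<d_A(v)$, and turns gray exactly when $d_A(v)=d_B(v)$. The ``strictly closer'' half follows by induction on $d_A(v)$: for any $v$ with $d_A(v)<d_B(v)$, every internal vertex $w$ of a shortest $a^*$--$v$ path satisfies $d_B(w)\ge d_B(v)-d_{\mathcal G}(w,v)=d_B(v)-d_A(v)+d_A(w)>d_A(w)$, so by the inductive hypothesis the $A$-wave propagates along that path without being blocked and reaches $v$ at time $d_A(v)$; a short check shows $v$ has no $B$-colored neighbor before that time, so it genuinely adopts $A$. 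The symmetric argument handles the $B$ case.

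The main obstacle is the equidistant case, since gray vertices block streams of diffusion and one must rule out an equidistant vertex being captured by a player through a roundabout path. I would handle this with an explicit coordinate analysis of the ``midpoints.'' Let $S=\{i:a^*_i\ne b^*_i\}$, so that $|S|=d:=d_{\mathcal G}(a^*,b^*)$. A direct computation shows $v$ is equidistant iff $v$ agrees with $a^*$ on exactly $|S|/2$ coordinates of $S$ and with $b^*$ on the other $|S|/2$ (and is arbitrary off $S$); in particular equidistant vertices exist iff $d$ is even. For such a $v$ at distance $t$, flipping a coordinate $i\in S$ on which $v$ agrees with $b^*$ yields a strictly $A$-closer neighbor colored $A$ at time $t-1$, while flipping one on which $v$ agrees with $a^*$ yields a strictly $B$-closer neighbor colored $B$ at time $t-1$; both kinds exist because each side of $S$ has size $|S|/2\ge 1$. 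Hence at time $t-1$ the still-uninfected vertex $v$ simultaneously sees both types and must turn gray, completing the distance rule.

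With the distance rule in hand, the equilibrium claim follows quickly. The map $\phi(v)=v\oplus(a^*\oplus b^*)$ is a graph automorphism of $Q_k$ that swaps $a^*$ and $b^*$ and satisfies $d_A(\phi(v))=d_B(v)$; since it has no fixed point when $a^*\ne b^*$, it induces a bijection between $A$-colored and $B$-colored vertices, so both players always receive the same utility $N=2^{k-1}-g/2$, where $g$ is the number of equidistant (gray) vertices. Now fix $b^*$ and vary player $A$'s seed $a$: by the midpoint count, $g=0$ precisely when $d_{\mathcal G}(a,b^*)$ is odd and $g\ge 2$ otherwise (and $a=b^*$ yields utility $0$), so $A$'s utility $2^{k-1}-g/2$ is maximized, at the value $2^{k-1}$, exactly by the odd-distance placements. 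By symmetry the same holds for $B$, so a profile is a simultaneous best response for both players --- i.e.\ a Nash equilibrium --- if and only if $d_{\mathcal G}(a^*,b^*)$ is odd. Finally, $d_{\mathcal G}(a^*,b^*)=(a^*\oplus b^*)'\mathbf{1}_k$, which is odd iff $(a^*\oplus b^*)'\mathbf{1}_k\equiv 1 \pmod 2$, giving the stated reformulation.
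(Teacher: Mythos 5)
Your proof is correct, but it reaches the conclusion by a genuinely different route than the paper. The paper's proof has two halves: for odd distance it uses the parity argument plus the pairing $v\mapsto v\oplus(a^*\oplus b^*)$ to get utility $2^{k-1}$ for each player, and then it establishes that \emph{no} profile can ever yield more than $2^{k-1}$ by induction on the dimension $k$, splitting $Q_k$ along a coordinate where the seeds agree into two $(k-1)$-subcubes joined by a perfect matching and arguing that adoption in the far subcube mirrors adoption in the near one. You avoid that induction entirely. Instead, you strengthen the paper's Lemma \ref{lemma:gray-eq} from containments to an exact trichotomy \emph{specific to the hypercube}: strictly closer vertices adopt the closer type, and equidistant vertices necessarily turn gray. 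This strengthening is not automatic --- on general graphs an equidistant vertex can adopt a type when gray vertices block one player's wave but not the other's --- so your coordinate argument (both a strictly-$A$-closer and a strictly-$B$-closer neighbor of any midpoint get colored at time $t-1$, forcing gray at time $t$) is genuinely needed, and it is correct because those neighbors' shortest paths consist entirely of strictly-closer vertices. With the trichotomy and the fixed-point-free automorphism you obtain the exact utility $2^{k-1}-g/2$ for every profile with $a\neq b$, where $g$ counts equidistant vertices, and the equilibrium characterization follows from the midpoint count ($g=0$ iff the distance is odd, $g\geq 2$ otherwise). What each approach buys: yours yields exact payoffs at \emph{all} profiles and sidesteps the paper's mirroring claim in the subcube induction (a step that itself requires justification, since diffusion can travel through the far subcube and return); the paper's route is shorter given Lemma \ref{lemma:gray-eq}, since it never needs to determine the fate of equidistant vertices, only to bound utilities from above.
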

\begin{proof}
First, we note that if $(a^*\oplus b^*)'\bold{1}_k=1\ \mbox{mod} \ 2$, then $U_A(a^*,b^*)=U_B(a^*,b^*)=2^{k-1}$, where $U_A(a^*,b^*)$ and $U_B(a^*,b^*)$ denote, respectively, the utilities of players $A$ and $B$ given that their initial seeds are at $a^*$ and $b^*$. In this case, there exists no vertex $v$ which has equal graphical distance to both $a^*$ and $b^*$, otherwise, if $d(a^*,v)=d(b^*,v)$, it means that $(a^*\oplus v)'\bold{1}=(b^*\oplus v)'\bold{1}$. Therefore, $(a^*\oplus b^*)'\bold{1}=(a^*\oplus v)'\bold{1}+(b^*\oplus v)'\bold{1}=0 \ \mbox{mod} 2$, which is in contradiction with the assumption. Therefore, using Lemma \ref{lemma:gray-eq}, every node of $Q_k$ must adopt either type $A$ or $B$. Finally, for every node $v$ such that $d(a^*,v)<d(b^*,v)$, one can assign a unique node $u=v\oplus (a^*\oplus b^*)$, such that $d(b^*,u)<d(a^*,u)$. This one-to-one bijection shows that $U_A(a^*,b^*)=U_B(a^*,b^*)=\frac{|V(Q_k)|}{2}=2^{k-1}$.   

To show that every such profile is indeed a pure Nash equilibrium, we show that for every arbitrary profile, the maximum utility that each player can gain is at most $2^{k-1}$. We show it by induction on $k$. For $k=1$, the result is trivial. Assuming that the statement is true for $k-1$, let $(a^*,b^*)$ be an arbitrary pair of nodes in $Q_k$. We consider two cases: 
\begin{itemize}
\item $a^*\oplus b^*=\bold{1}_k$. In this case $a^*$ and $b^*$ are binary complimentary of each other (by a simple translation of each node by the binary vector $a^*$ and without any loss of generality we may assume $a^*=0$ and $b^*=\bold{1}_k$). Therefore, by symmetry of $Q_k$, it is not hard to see that $U_A(a^*,b^*)=U_B(a^*,b^*)$. Since the total utility is at most $2^{k}$, thus we have $U_A(a^*,b^*)=U_B(a^*,b^*)\leq 2^{k-1}$. Such a situation for the case of $k=3$ is illustrated in Figure \ref{fig:hypercube}.   

\smallskip
\item $a^*\oplus b^*\neq \bold{1}_k$. In this case, there exists at least one coordinate such that $a^*$ and $b^*$ agree on it. Without any loss of generality, let us assume $a^*_i=b^*_i$ and let $Q=\{x\in V(Q_k): x_i=a^*_i\}$ and $Q^c=V(Q_k)\setminus Q$. Therefore, it is not hard to see that the induced subgraphs by vertices $Q$ and $Q^c$ are hypercubes of dimension $k-1$ which are connected to each other through a perfect matching (a set of disjoint edges which connect all the nodes in pairs). Furthermore, $a^*$ and $b^*$ are both located in $Q$. Therefore, by the induction hypothesis, the utility of the players on $Q$, i.e., $U_A^{Q}(a^*,b^*)$, $U_B^{Q}(a^*,b^*)$ is at most $2^{k-2}$. Moreover, since $Q$ and $Q^c$ are connected through a perfect matching and there is only one step delay in the diffusion process between $Q$ and $Q^c$, hence the adoption of vertices in $Q^c$ is exactly the same as those in $Q$. Therefore, we have $U_A(a^*,b^*)=2U_{A}^{Q}(a^*,b^*)\leq 2\times 2^{k-2}=2^{k-1}$, and similarly $U_B(a^*,b^*)\leq 2^{k-1}$.        
\end{itemize} 

Overall, we have shown that a profile $(a^*,b^*)$ in $Q_k$ is a Nash equilibrium if and only if $U_A(a^*,b^*)=U_B(a^*,b^*)=2^{k-1}$, which happens if and only if $a^*$ and $b^*$ have an odd graphical distance from each other, or equivalently $(a^*\oplus b^*)'\bold{1}_k=1\ \mbox{mod} \ 2$.

\begin{figure}[htb] 
\vspace{-4cm}
\begin{center}
\includegraphics[totalheight=.3\textheight,
width=.55\textwidth,viewport=-120 0 550 600]{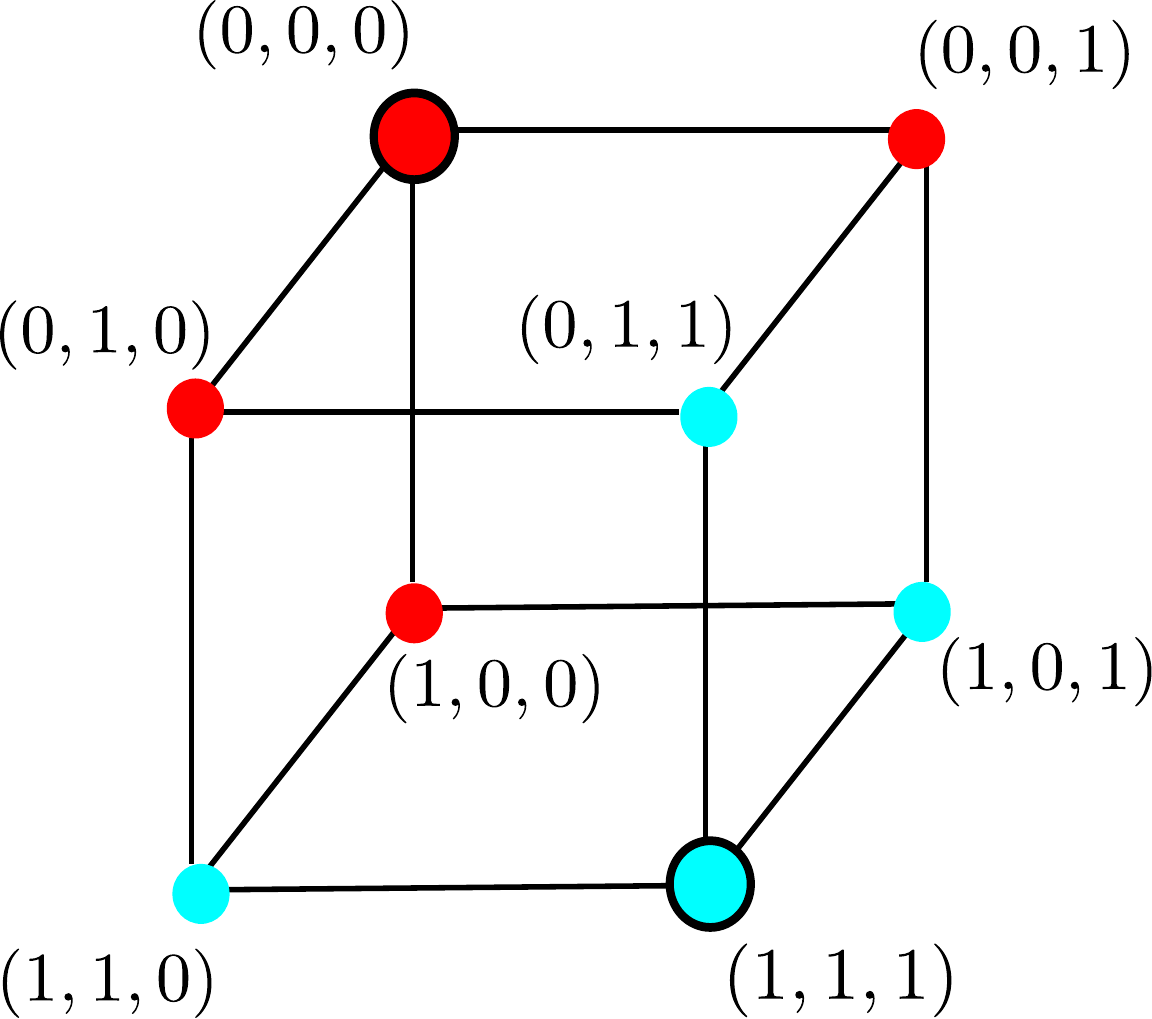}
\end{center}\vspace{-0.2cm}\caption{A Nash equilibrium of the diffusion game over 3-dimensional hypercube. The circled nodes denote the initial seeds and the figure illustrates the final state of the game.}\label{fig:hypercube}
\end{figure}          
\end{proof}

\bigskip
\section{NP-Hardness of Making a Decision on Existence of Nash Equilibrium}\label{sec:NP-hardness}

In this section, we first characterize the final state of the diffusion process based on relative distances of the players' initial seed placements on the network. Using this characterization we establish a hardness result for the decision process on the existence of a pure-strategy Nash equilibrium in the diffusion game.
 
\smallskip
\begin{lemma}\label{lemma:gray-eq}
Let $\mathcal{N}_A$ and $\mathcal{N}_B$ denote the set of nodes which adopt, respectively, types $A$ and $B$ at the end of the process for a particular initial selection of seeds $a, b\in V$. Moreover, let $\mathcal{N}$ be the set of gray or uninfected (white) nodes by the end of the process. Then, 
\begin{align}\nonumber
& \ \ \ \ \ \ \ \ \ \ \ \ \ \ \mathcal{N} \subseteq \{i: d_{\mathcal{G}}(a,i)=d_{\mathcal{G}}(b,i)\},\cr
&\{i: d_{\mathcal{G}}(a,i)\!<\! d_{\mathcal{G}}(b,i)\}\!\subseteq\!\mathcal{N}_A \!\subseteq\! \{i: d_{\mathcal{G}}(a,i)\!\leq\! d_{\mathcal{G}}(b,i)\}, \cr
&\{i: d_{\mathcal{G}}(b,i)\!<\! d_{\mathcal{G}}(a,i)\}\!\subseteq\! \mathcal{N}_B \!\subseteq\! \{i: d_{\mathcal{G}}(b,i)\!\leq\! d_{\mathcal{G}}(a,i)\}. \cr
\end{align}
\end{lemma}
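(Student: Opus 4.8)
The plan is to reduce the whole statement to the two ``lower'' inclusions, namely $\{i: d_{\mathcal{G}}(a,i)<d_{\mathcal{G}}(b,i)\}\subseteq \mathcal{N}_A$ and its mirror image $\{i: d_{\mathcal{G}}(b,i)<d_{\mathcal{G}}(a,i)\}\subseteq \mathcal{N}_B$. Once these are in hand, everything else is immediate from the fact that, by the rules of the process, the sets $\mathcal{N}_A$, $\mathcal{N}_B$, $\mathcal{N}$ are pairwise disjoint and cover $V$. Indeed, if $i\in\mathcal{N}_A$ but $d_{\mathcal{G}}(b,i)<d_{\mathcal{G}}(a,i)$, the $B$-inclusion would force $i\in\mathcal{N}_B$, a contradiction; this gives $\mathcal{N}_A\subseteq\{i: d_{\mathcal{G}}(a,i)\le d_{\mathcal{G}}(b,i)\}$ and, symmetrically, the upper inclusion for $B$. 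Likewise, if some $i\in\mathcal{N}$ had $d_{\mathcal{G}}(a,i)\neq d_{\mathcal{G}}(b,i)$, one of the two lower inclusions would place $i$ in $\mathcal{N}_A$ or $\mathcal{N}_B$, contradicting $i\in\mathcal{N}$; hence $\mathcal{N}\subseteq\{i: d_{\mathcal{G}}(a,i)=d_{\mathcal{G}}(b,i)\}$.

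To prove the lower inclusion for $A$, I would first record a ``speed limit'' for the diffusion: writing $\tau_A(i)$ and $\tau_B(i)$ for the times at which $i$ adopts types $A$ and $B$ (set to $+\infty$ if it never does), a one-step induction on time shows $\tau_A(i)\ge d_{\mathcal{G}}(a,i)$ and $\tau_B(i)\ge d_{\mathcal{G}}(b,i)$ for every $i$, since an adoption at time $t$ requires a like-typed neighbor already infected at time $t-1$, so the infection advances at most one hop per step. Now fix $i$ with $d:=d_{\mathcal{G}}(a,i)<d_{\mathcal{G}}(b,i)$ and choose a shortest $a$--$i$ path $a=v_0,v_1,\dots,v_d=i$, so that $d_{\mathcal{G}}(a,v_j)=j$. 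The triangle inequality gives $d_{\mathcal{G}}(b,v_j)\ge d_{\mathcal{G}}(b,i)-(d-j)>j$ for each $j$, i.e. every vertex on this path is itself strictly closer to $a$ than to $b$. I would then show by induction on $j$ that $v_j$ adopts type $A$ exactly at time $j$.

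The base case $v_0=a$ is immediate. For the inductive step, suppose $v_j$ has type $A$ at time $j$ and consider $v_{j+1}$. The crux---and the step I expect to cause the most trouble, precisely because gray nodes can otherwise block or poison a diffusion stream---is to certify that no neighbor of $v_{j+1}$ ever carries type $B$ at any time $\le j$. This follows from the distance estimate above: a neighbor $w$ with $\tau_B(w)\le j$ would satisfy $d_{\mathcal{G}}(b,w)\le\tau_B(w)\le j$, hence $d_{\mathcal{G}}(b,v_{j+1})\le j+1$, contradicting $d_{\mathcal{G}}(b,v_{j+1})\ge j+2$ obtained from $d_{\mathcal{G}}(b,v_{j+1})>j+1$ and integrality. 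Consequently $v_{j+1}$ is never exposed to both types through time $j$, so it remains white (it is never grayed) up to time $j$; by the speed limit it has not adopted type $A$ before time $j+1$ either, yet at time $j$ it has the type-$A$ neighbor $v_j$ and no type-$B$ neighbor, which forces it to adopt type $A$ at time $j+1$. This completes the induction, so $v_d=i\in\mathcal{N}_A$; exchanging the roles of $a$ and $b$ yields the inclusion for $B$, and, via the reduction of the first paragraph, the full lemma.
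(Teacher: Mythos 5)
Your proof is correct, but it runs in the opposite logical direction from the paper's, and the two decompositions are genuinely different. The paper proves the gray-node characterization \emph{first}: it inducts on the birth time of gray nodes, with a case analysis on whether the shortest paths from the two seeds to a newly gray node themselves contain earlier gray nodes, then handles white nodes, and only afterwards obtains the inclusion $\{i: d_{\mathcal{G}}(a,i)<d_{\mathcal{G}}(b,i)\}\subseteq\mathcal{N}_A$ by contradiction (a node strictly closer to $a$ that adopted $B$ would need a blocking gray node on its shortest path from $a$, and equidistance of that gray node violates the triangle inequality). You instead prove the two lower inclusions directly and recover the gray-set and upper inclusions for free from the fact that $\mathcal{N}_A,\mathcal{N}_B,\mathcal{N}$ partition $V$ --- a reduction the paper only uses implicitly for the upper inclusions. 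Your key tool is also different: the ``speed limit'' $\tau_A(i)\ge d_{\mathcal{G}}(a,i)$, $\tau_B(i)\ge d_{\mathcal{G}}(b,i)$, combined with an induction \emph{along a shortest path} rather than over time, which shows each $v_j$ is provably never exposed to type $B$ (hence never grayed or blocked) before it adopts $A$ at time exactly $j$. This localizes all the difficulty about gray nodes into a single two-line distance estimate and avoids the paper's delicate path-through-gray-nodes case analysis, making the argument arguably cleaner and more self-contained; the paper's route, on the other hand, yields the equidistance property of gray nodes as the primary object, which is the form in which the lemma is most directly reused later (e.g., in bounding $\mathbb{P}(I^c(x))$ in the random-graph section), though of course both routes end with the identical statement. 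One small point worth making explicit if you write this up: the case $d_{\mathcal{G}}(a,i)<d_{\mathcal{G}}(b,i)$ presupposes $a\neq b$, so the seed $a$ is genuinely of type $A$ at time $0$; when $a=b$ every inclusion in the lemma is vacuous or trivial.
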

\begin{proof}
Here we only sketch the steps of the proof and refer readers to \cite{etesami2014complexity} for the complete proof. The proof is by induction on the time step $t=0, 1,\ldots$. Assuming that the statement of the Lemma is true for all the gray nodes such that they are born at steps $t\leq k$, by considering different possibilities as to why a new node will change to gray at the next time step, one can show that this can happen only if the new node has the same graphical distance to the seed nodes. A similar argument shows that uninfected nodes can be reached from the seed nodes only by passing through some gray nodes, and hence, they must lie within the same graphical distance of the seed nodes. 
\end{proof}

\smallskip
Note that in Lemma \ref{lemma:gray-eq} we assumed that each player can only place one seed in the network as its initial placement. However, the result can be easily generalized to the case when each player (for example player $A$) is allowed to choose a set of nodes $S_A\subset V$ as its initial seed placements. In this case we just need to replace $d_{\mathcal{G}}(S_A,x)$ instead of $d_{\mathcal{G}}(a,x)$ in the statement of Lemma \ref{lemma:gray-eq} and all the other results carry naturally. Moreover, a similar result can be proved when there are more than two players in the game.

\smallskip
As it was shown before in \cite{alon2010note} and \cite{takehara2012comment}, one can always construct networks with diameter greater than or equal to 2 such that the diffusion game over such networks does not admit any pure-strategy Nash equilibrium. In fact, by a closer look at Lemma \ref{lemma:gray-eq}, one can see that there is some similarity between Voronoi games \cite{durr2007nash} and competitive diffusion games. Note, however, that in the competitive diffusion game a diffusion process unfolds while there is no notion of diffusion in Voronoi games. However, since at the end of the process both games demonstrate behavior close to each other, it seems natural to compare the complexity of Nash equilibria in these two games. In fact, in the following we show that the decision on the existence of Nash equilibrium in a diffusion game is NP-hard. Toward that goal, we first prove some relevant results and modify the configuration of the diffusion game to make a connection with Voronoi games. Borrowing some of the existing results from the Voronoi games literature, we prove the NP-hardness of verification of existence of Nash equlibrium in the diffusion game. We prove it by reduction from the 3-partitioning problem which is shown to be an NP-complete problem \cite{garey1975complexity}. In the 3-partitioning problem, we are given integers $\alpha_1, \alpha_2,\ldots, \alpha_{3m}$ and a $\beta$ such that $\frac{\beta}{4}<\alpha_i<\frac{\beta}{2}$ for every $1\leq i\leq 3m$, $\sum_{i=1}^{3m}\alpha_i=m\beta$ and have to partition them into disjoint sets $P_1, \ldots, P_m \subseteq \{1,2,\ldots,3m\}$ such that for every $1\leq j\leq m$ we have $\sum_{i\in P_j}\alpha_i=\beta$.
 
\smallskip
First, we briefly describe the stages that we will go through toward
proving the NP-hardness of making a decision on the existence of a Nash Equilibrium. Given an arbitrary network $\mathcal{G}$, by adding extra nodes and edges properly we expand this network into a new network $\bar{\mathcal{G}}$ such that we make sure that if there exists a Nash equilibrium in the original network, it must lie within a confined subset of nodes in the extended network $\bar{\mathcal{G}}$ (Lemma \ref{Lemma:U-magnified}). This allows us to confine our attention to only a specific subset of nodes in the extended network in order to search for a Nash equilibrium. Following this, we construct a new network (Theorem \ref{thm:NP-hard}) such that any Nash equilibrium of the game is equivalent to a solution of the 3-partitioning problem which is known to be an NP-complete problem. This establishes the NP-hardness of arriving at a decision on the existence of a Nash equilibrium in the diffusion game. We begin a formal proof of the result by stating the following lemma.    

\smallskip
\begin{lemma}\label{Lemma:U-magnified}
Let $T$ be a subset of $V(\mathcal{G})$. Then, there exists an extended graph $\bar{\mathcal{G}}$ such that there is a bijection between the Nash equilibria in $\mathcal{G}$ where the seeds (actions of the players) are restricted to $T$, and the unrestricted Nash equilibria in $\bar{\mathcal{G}}$. 
\end{lemma}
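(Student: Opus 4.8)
The plan is to turn the constraint ``seeds must lie in $T$'' into a genuine incentive constraint inside a larger game, by attaching to each node of $T$ a gadget whose capture is worth so much that no rational player ever seeds outside $T$. Concretely, I would build $\bar{\mathcal{G}}$ from $\mathcal{G}$ by appending to every $t\in T$ a ``blob'' of $M$ fresh pendant vertices, each adjacent only to $t$, where $M$ is a large integer to be fixed later (at least $|V(\mathcal{G})|+1$). Since the new vertices have degree one, they can neither shorten a shortest path between two original vertices nor cause an original vertex to turn gray; hence, by Lemma~\ref{lemma:gray-eq}, the diffusion restricted to $V(\mathcal{G})$ inside $\bar{\mathcal{G}}$ is identical to the diffusion in $\mathcal{G}$. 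This lets me read every utility in $\bar{\mathcal{G}}$ off the graphical distances alone and split it into an ``original'' part, which is at most $|V(\mathcal{G})|$, and a ``blob'' part, which is a multiple of $M$.

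The first substantive step is to localize equilibria: I would show that in every unrestricted Nash equilibrium of $\bar{\mathcal{G}}$ both seeds lie in $T$. The driving mechanism is the magnification. By Lemma~\ref{lemma:gray-eq} the blob attached to $t$ is adopted wholesale by whichever player is strictly closer to $t$, and is wasted (it stays white behind the gray gate $t$) in case of a tie, so the blob term of size $M$ dominates the $O(|V(\mathcal{G})|)$ contribution of the original vertices once $M$ is large. I would then exhibit, for any seed placed off $T$, a strictly improving relocation to a node of $T$ that secures an additional blob, the gain of at least $M$ swamping any loss incurred among the $|V(\mathcal{G})|$ original vertices.

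Having pinned both seeds to $T$, the second step is to match the two equilibrium sets. For $a,b\in T$ I would write $U_A^{\bar{\mathcal{G}}}(a,b)=M\cdot|\{t\in T: d_{\mathcal{G}}(a,t)<d_{\mathcal{G}}(b,t)\}|+U_A(a,b)$, and symmetrically for $B$, and then seek to show that a $T$-profile is stable against all deviations in $\bar{\mathcal{G}}$ if and only if it is stable against all deviations within $T$ in the original game. The natural candidate bijection is the identity on $T$-profiles, so the real content is proving it is well defined in both directions, with the tie/gray bookkeeping carried along via Lemma~\ref{lemma:gray-eq}.

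The hard part is making these two steps consistent with a \emph{single} choice of gadget, because the very magnification that forces seeds onto $T$ simultaneously reweights the $T$-vertices relative to the remaining vertices, and can therefore distort best responses: a deviation that sheds ordinary vertices while grabbing a blob must be shown to be unprofitable in $\bar{\mathcal{G}}$ \emph{exactly} when it is unprofitable in the restricted game. A naive pendant blob does not obviously achieve this, since a player merely \emph{near} $t$ already captures the blob against a distant opponent. Controlling this interaction --- tuning the gadget's size $M$ and, if necessary, its internal structure so that a blob is secured only from $t$ itself rather than from any nearby vertex, while the ordering of $T$-profiles by utility is preserved --- is where the real care is needed, and it is also where the gray-node ties must be handled most delicately.
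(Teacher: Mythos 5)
Your localization step fails for the gadget you actually propose, and your closing paragraph concedes the problem without repairing it, which leaves the proof incomplete at its central point. With pendant blobs, capture of the blob at $t$ is decided by \emph{relative distance} to $t$, not by occupying $t$: take $t_1,t_2,t_3\in T$ pairwise non-adjacent and a vertex $v\notin T$ adjacent to all three; against an opponent seeded at $t_1$, a player seeded at $v$ is at distance $2$ from the blobs of $t_2$ and $t_3$ while the opponent is at distance at least $3$, so the player at $v$ wins two blobs ($\ge 2M$), strictly more than any single $T$-node yields (occupying $t_2$, say, wins blob $t_2$ but ties and grays out blob $t_3$). Hence for large $M$ best responses migrate to such central non-$T$ vertices, and unrestricted equilibria of $\bar{\mathcal{G}}$ need not place seeds in $T$ at all. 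The same defect breaks your second step even on $T$-profiles: by your own formula $U_A^{\bar{\mathcal{G}}}(a,b)=M\cdot|\{t\in T: d_{\mathcal{G}}(a,t)<d_{\mathcal{G}}(b,t)\}|+U_A(a,b)$, the added term is \emph{not constant} over $T$-profiles, so for large $M$ the restriction of $\bar{\mathcal{G}}$'s game to $T$ is a blob-counting (weighted Voronoi) game, not the original restricted game, and the identity map cannot be the claimed bijection.

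The paper's construction differs in exactly the way needed to kill both problems at once: each $i\in T$ receives $n=2|V(\mathcal{G})|+1$ new neighbors $c_{i1},\ldots,c_{in}$, and in addition, for every layer $j$, the nodes $c_{1j},\ldots,c_{|T|j}$ across different blobs form a clique. A player seeded at $k\in T$ is then within distance $1$ of its own blob and distance $2$ of every other added node, so (i) anyone seeded off $T$ (whether in $V(\mathcal{G})\setminus T$ or inside the added part) collects at most $|T|+|V(\mathcal{G})|<n$ from the added part, while relocating to an unoccupied $\ell\in T$ secures all $n$ nodes of $C_\ell$ --- this forces every equilibrium seed into $T$; and (ii) once all players sit in $T$, the layer nodes of unoccupied blobs are at distance $2$ from all seeds, hence turn gray and block the added cross-paths, so each player wins exactly its own blob and the added utility is the \emph{constant} $n$, which cancels from every equilibrium comparison and also keeps the diffusion among the original vertices unchanged. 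That constancy is precisely what your pendant design cannot deliver, and it is the missing idea --- cross-linking the gadgets so that a blob is worth $n$ to its owner and nothing to anyone else --- rather than a matter of tuning $M$.
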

\begin{proof}
Consider the graph $\bar{\mathcal{G}}$ depicted in Figure \ref{fig:U-EQ}, which is constructed using $\mathcal{G}$ by adding $|T|n$ new nodes and $n\frac{|T|(|T|+1)}{2}$ new edges. Note that $|T|$ denotes the number of the nodes in $T$ and $n=2|V(\mathcal{G})|+1$ is a positive integer. It is important to note here that although $n>|V(\mathcal{G})|$, it is polynomially bounded above by $|V(\mathcal{G})|$. With each node $i\in T$ we associate a set of $n$ new nodes $C_i=\{c_{i1}, c_{i2}, \ldots, c_{in}\}$ and we connect all of them to node $i$. Furthermore, for $j=1,\ldots,n$, we connect all the nodes $c_{1j}, \ldots, c_{|T|j}$ to each other. In other words, nodes $\{c_{1j}, \ldots, c_{|T|j}\}$ form a clique for each $j=1,\ldots,n$. 

\begin{figure}[htb]
\begin{center}
\hspace{-2cm}\includegraphics[totalheight=.25\textheight,
width=.35\textwidth,viewport=0 0 700 700]{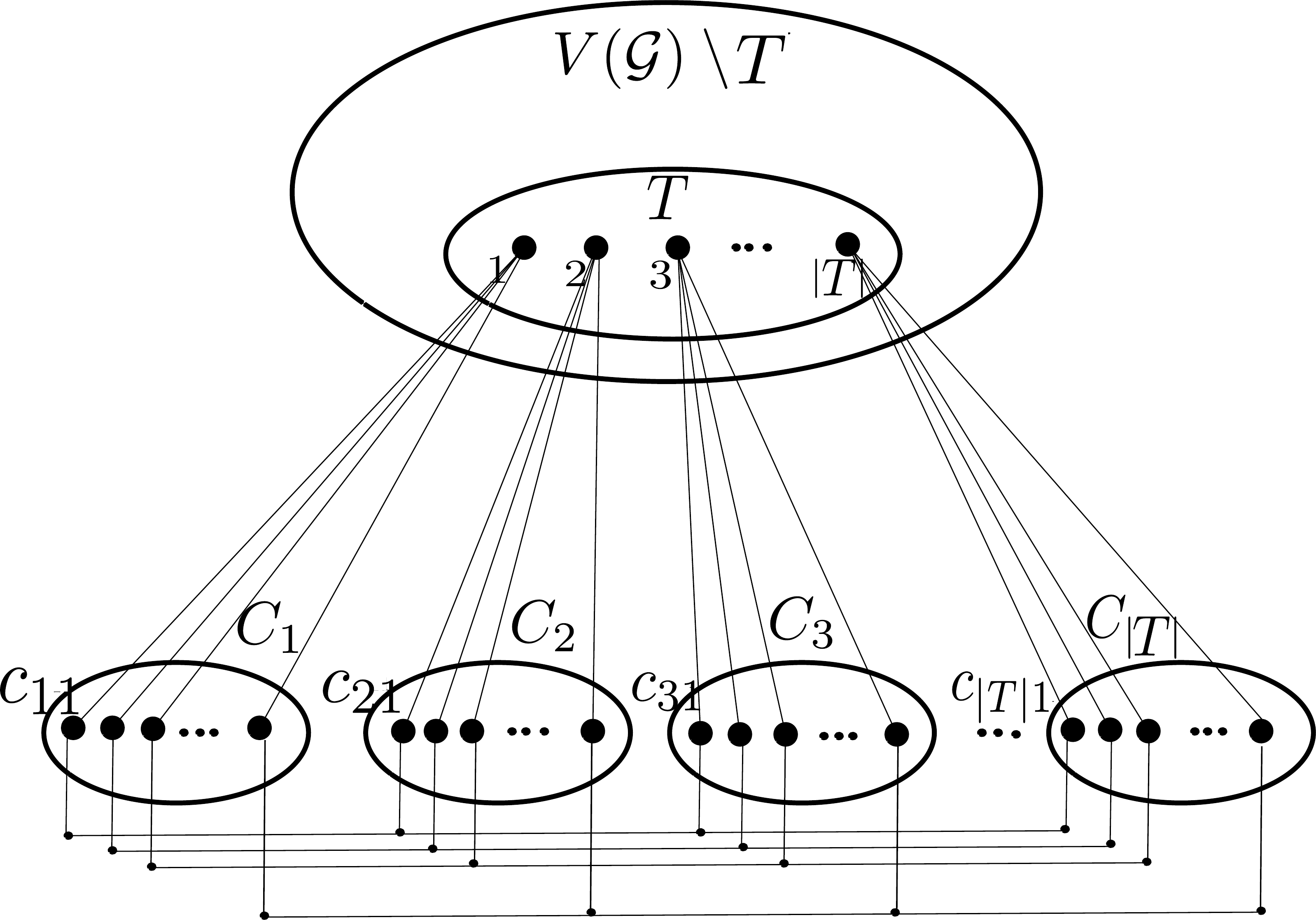}
\end{center}\caption{Extension of graph $\mathcal{G}$ to $\bar{\mathcal{G}}$}\label{fig:U-EQ}
\end{figure}

\smallskip
Now assume that at least one player puts his node seed in $k\in T$. We refer to this player as the \textit{first player} and denote its type by $A$. We claim that all the other players must play in $T$ as well. To prove this, suppose that another player which we refer to as the \textit{second player} with corresponding type $B$ chooses node $\ell\in T, \ell\neq k$. In this case he will earn at least $n$ due to winning all the nodes in $C_{\ell}$. Now let us assume that the second player plays in the bottom part of the graph (Figure \ref{fig:U-EQ}), i.e. $L=\cup_{i=1}^{|T|}C_i$. We consider two cases:
\begin{itemize}
\item[1.] He plays in $C_k$ and without any loss of generality and by symmetry, we may assume that he plays in $c_{k1}$. 
\item[2.] He plays in $C_{\ell}$ for some $\ell\neq k$ such as $c_{\ell 1}$.
\end{itemize}
 
\smallskip 
In the first case and after the first step of diffusion, all the elements $c_{11},c_{21},\ldots, c_{|T|1}$ will adopt type of the second player, i.e. $B$ (because there is a direct link between them and $c_{k1}$), and all the elements $c_{k2},c_{k3},\ldots, c_{kn}$ will adapt type of the first player, i.e. $A$. At the second time step, the second player can adopt all the elements of $T\setminus \{k\}$ in the best case. On the other hand, all the elements of $L\setminus \{c_{11},c_{21},\ldots, c_{|T|1}\}$ will change to $A$. Therefore, in this case the second player can gain at most $|T|+|V(\mathcal{G})|\leq 2|V(\mathcal{G})|<n$. 

\smallskip
In the second case and after the first step of diffusion, the second player can adopt only $\{c_{11}, c_{21}, \ldots, c_{|T|1},\ell\}\setminus \{c_{k1}\}$ while the first player will adopt at least $\{c_{k2},\ldots, c_{kn}\}$, (note that node $c_{k1}$ will change to gray). However, at the second time step all the nodes in $C_{\ell}\setminus \{c_{\ell 1}\}$ and also in $L\setminus \big\{C_k\cup C_{\ell}\cup \{c_{11}, c_{21}, \ldots, c_{|T|1}\}\big\}$ will change to type $A$. Therefore, in this case, the second player gains at most $|T|+|V(\mathcal{G})|\leq 2|V(\mathcal{G})|<n$.

\smallskip
Furthermore, if the second player places his seed at a node in $V(\mathcal{G})\setminus T$, then in the best scenario it will take at least two steps for the seed to be diffused to nodes of $L$. On the other hand, type $A$ can be diffused through every node of $L$ in no more than 2 steps. Thus all the nodes in $L$ either adopt $A$ or they change to gray and hence, in this case the second player can not earn more than $V(\mathcal{G})-1<n$. From the above discussion it should be clear that in either of the above cases, if a player is playing in $L \cup V(\mathcal{G})\setminus T$ he can always gain more by deviating to the set $T$. Thus in each equilibrium players must put their seeds in $T$.

\smallskip
Finally suppose that $\bar{Q}$ is a Nash equilibrium profile in $\bar{\mathcal{G}}$. By the above argument we know that all the players must play in $T$ and thus, each of these players gains exactly $n$ from the set $L$. Therefore, the utility of players is equal to the utility that they would gain by playing on $\mathcal{G}$ plus $n$. This shows that $\bar{Q}$ must be an equilibrium for $\mathcal{G}$ when the strategies of players are restricted to $T$. Similarly, if $Q$ is an equilibrium of $\mathcal{G}$ when the strategies of players are restricted to $T$, it is also an equilibrium for $\bar{\mathcal{G}}$ as we know all the equilibria seeds of players (if there is any) must be in $T$. This shows that the set of equilibria of $\bar{\mathcal{G}}$ is equivalent to the set of equilibria of $\mathcal{G}$ with the restricted strategy set $T$.                  
\end{proof}

\bigskip
\begin{theorem}\label{thm:NP-hard}
Given a graph $\mathcal{G}$ and $m\ge 2$ players, the decision process on the existence of pure-strategy Nash equilibrium for the diffusion game on $\mathcal{G}$ is NP-hard. 
\end{theorem}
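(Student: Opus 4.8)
The plan is to prove NP-hardness by a polynomial-time reduction from the 3-partitioning problem, exploiting the structural similarity between the diffusion game and Voronoi games that Lemma~\ref{lemma:gray-eq} exposes. The starting point is the observation that, once the seeds are fixed, Lemma~\ref{lemma:gray-eq} tells us each player captures essentially the nodes strictly closer to its own seed, while the equidistant nodes are lost (they turn gray). This is precisely a Voronoi-type partition of the graph, except that ties are discarded rather than split. I would therefore import a 3-partition-based hardness construction for Voronoi games on graphs (e.g., the graph Voronoi game reduction of \cite{durr2007nash}) and adapt its gadgets so that the gray-node tie-breaking rule does not destroy the equilibrium correspondence.

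Concretely, given an instance $\alpha_1,\dots,\alpha_{3m}$, $\beta$ of 3-partitioning, I would build a base graph $\mathcal{G}$ in which $m$ ``bin'' vertices and $3m$ ``item'' structures are wired together so that the utility a player derives from occupying a bin vertex is governed by the total weight $\sum_{i\in P_j}\alpha_i$ of the items routed to that bin. Each item weight $\alpha_i$ would be encoded by attaching a pendant set of size polynomial in $\alpha_i$, so that the graphical distances from the bin vertices translate the combinatorics of the partition into payoff values. I would then invoke Lemma~\ref{Lemma:U-magnified} with $T$ chosen to be exactly the set of bin vertices: attaching the cluster gadget $C_i$ of that lemma to each candidate vertex forces every equilibrium of the extended graph $\bar{\mathcal{G}}$ to place all $m$ seeds inside $T$ (and, since coincident seeds turn gray, into distinct bins), so that the search for a Nash equilibrium reduces to a search over assignments of the $m$ players to the $m$ bins.

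The heart of the argument is the equivalence: a profile in which the $m$ players occupy the $m$ bin vertices is a Nash equilibrium if and only if the induced assignment of items to bins is a valid 3-partition, i.e. every bin receives total weight exactly $\beta$. I would show that whenever some bin is overloaded relative to another, the player at the lighter bin can strictly increase its utility by migrating toward the heavier region, so no unbalanced configuration is stable; conversely, in a perfectly balanced configuration each player already attains the maximum share it can obtain against the fixed positions of the others, so no profitable deviation exists. The constraint $\tfrac{\beta}{4}<\alpha_i<\tfrac{\beta}{2}$ from the 3-partition instance is exactly what guarantees that each balanced bin must receive precisely three items, pinning down the triple structure and ruling out degenerate balanced configurations.

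The step I expect to be the main obstacle is controlling the gray nodes. Because equidistant vertices vanish rather than being split evenly, the payoff of a deviation is not the Voronoi payoff, and a naive transcription could create large blocks of gray nodes that mask the $\alpha_i$-dependence I need, or make an unbalanced profile spuriously appear stable. I would handle this by designing the gadgets so that, for every relevant pair of seed positions in $T$, the number of equidistant (gray) nodes is a fixed constant independent of the partition, so that it contributes an identical additive offset to all players and drops out of every deviation inequality; alternatively, I would perturb the pendant sizes to break parity and eliminate ties altogether. Verifying that this can be done while keeping $|V(\bar{\mathcal{G}})|$ polynomial in $m$, $\beta$, and $\max_i \alpha_i$, and while preserving the bijection of Lemma~\ref{Lemma:U-magnified}, is the technical crux of the proof.
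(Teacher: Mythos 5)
Your high-level strategy matches the paper's (reduction from 3-partitioning plus Lemma~\ref{Lemma:U-magnified} to confine seeds to a designated set $T$), but the concrete construction has a gap that breaks the reduction. With $m$ players and only $m$ ``bin'' vertices, once Lemma~\ref{Lemma:U-magnified} forces every equilibrium seed into $T$ (and coincident seeds turn gray, so bins are occupied distinctly), every candidate equilibrium occupies \emph{all} bins, and the item-to-bin assignment is then fixed by the wiring of the graph you built --- not by the players' strategic choices. But the reduction must encode the \emph{search} over partitions in the players' choices; your graph would have to be wired according to a partition the reduction does not know. This is precisely why the paper's construction uses a clique of size $\binom{3m}{3}$ with one vertex $u_{i,j,k}$ for \emph{every} triple, and attaches the independent set $\mathcal{I}_i$ (of size $c\alpha_i$) to all triple-vertices containing $i$: choosing a seed \emph{is} choosing a triple, so the profile of the $m$ clique players encodes a candidate partition.

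The second gap is the instability argument for unbalanced profiles. Your claim that ``the player at the lighter bin can strictly increase its utility by migrating toward the heavier region'' is exactly the step that gray nodes destroy: a deviator moving into territory adjacent to another player's seed turns the contested items gray (Lemma~\ref{lemma:gray-eq}), and its vacated items become equidistant from the remaining seeds and go gray as well, so such a deviation gains essentially nothing. In the triple encoding, any family of disjoint triples covering all items would then be an equilibrium \emph{whether or not it is balanced}, and the equivalence ``equilibrium iff 3-partition exists'' fails. The paper's remedy is the piece your proposal lacks: the right-hand gadget of nine stars of size $d$, which (i) admits no 2-player equilibrium, so at equilibrium it holds exactly one of the $m+1$ players (and at least one, since $9d>\tfrac{3\beta c}{2}$ makes it too attractive to leave empty), and (ii) offers any single deviator a payoff of $4d$, with $d$ calibrated so that $(\beta-1)c<4d<\beta c$. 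This external, precisely calibrated escape option is what makes every player earning less than $\beta c$ deviate, killing all unbalanced configurations while leaving balanced ones stable. Your proposed handling of gray nodes (a partition-independent additive offset, or parity perturbation of pendant sizes) addresses constant offsets in payoffs, not this contested-deviation failure, so the ``only if'' direction of your equivalence would not go through.
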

\smallskip
\begin{proof}
Assume that integers $\alpha_1, \alpha_2,\ldots, \alpha_{3m}$ and $\beta>3$ are given such that $\frac{\beta}{4}<\alpha_i<\frac{\beta}{2}$ for every $1\leq i\leq 3m$, $\sum_{i=1}^{3m}\alpha_i=m\beta$. Moreover, let $c={3m \choose 3}$ and choose an integer $d$ such that $\frac{(\beta-1)c}{4}<d<\frac{\beta c}{4}$. Consider the graph depicted in Figure \ref{fig:Complexity-graph}, where the set $T$ is defined to be the set of vertices whose induced subgraph $\mathcal{G}[T]$ is illustrated in the dashed-line area. In fact, such a graph is composed of three parts:
\begin{itemize}
\item The right graph: This is a graph of $9d$ nodes, composed of 9 stars of size $d$ where the centers of the stars are connected as it is shown in Figure \ref{fig:Complexity-graph}. It has been shown in \cite{takehara2012comment} that this graph does not admit any pure-strategy Nash equilibrium when there are two agents. In fact, if there is only one player on this graph, he will gain $9d$ and if there are two players on it, one of them can always deviate to gain at least $4d$.
\item The middle graph: This graph is simply a clique of size ${3m \choose 3}$ where the nodes are labeled by $u_{i,j,k}$ for all possible triples $\{i,j,k\},\  i,j,k\in[3m]$.
\item The left graph: This graph is composed of $3m$ independent sets (i.e., there exists no edge between vertices of each set) $\mathcal{I}_1,\ldots,\mathcal{I}_{3m}$ of sizes $c\alpha_1,\ldots,c\alpha_{3m}$, respectively, such that all the nodes in the independent sets $\mathcal{I}_i,\mathcal{I}_j$, and $\mathcal{I}_{k}$ are connected to node $u_{i,j,k}$ in the middle graph.
\end{itemize}

\vspace{0.5cm}   
\begin{figure}[htb]
\vspace{1cm} 
\begin{center}
\hspace{-3.8cm}
\includegraphics[totalheight=.22\textheight,
width=.31\textwidth,viewport=50 0 750 700]{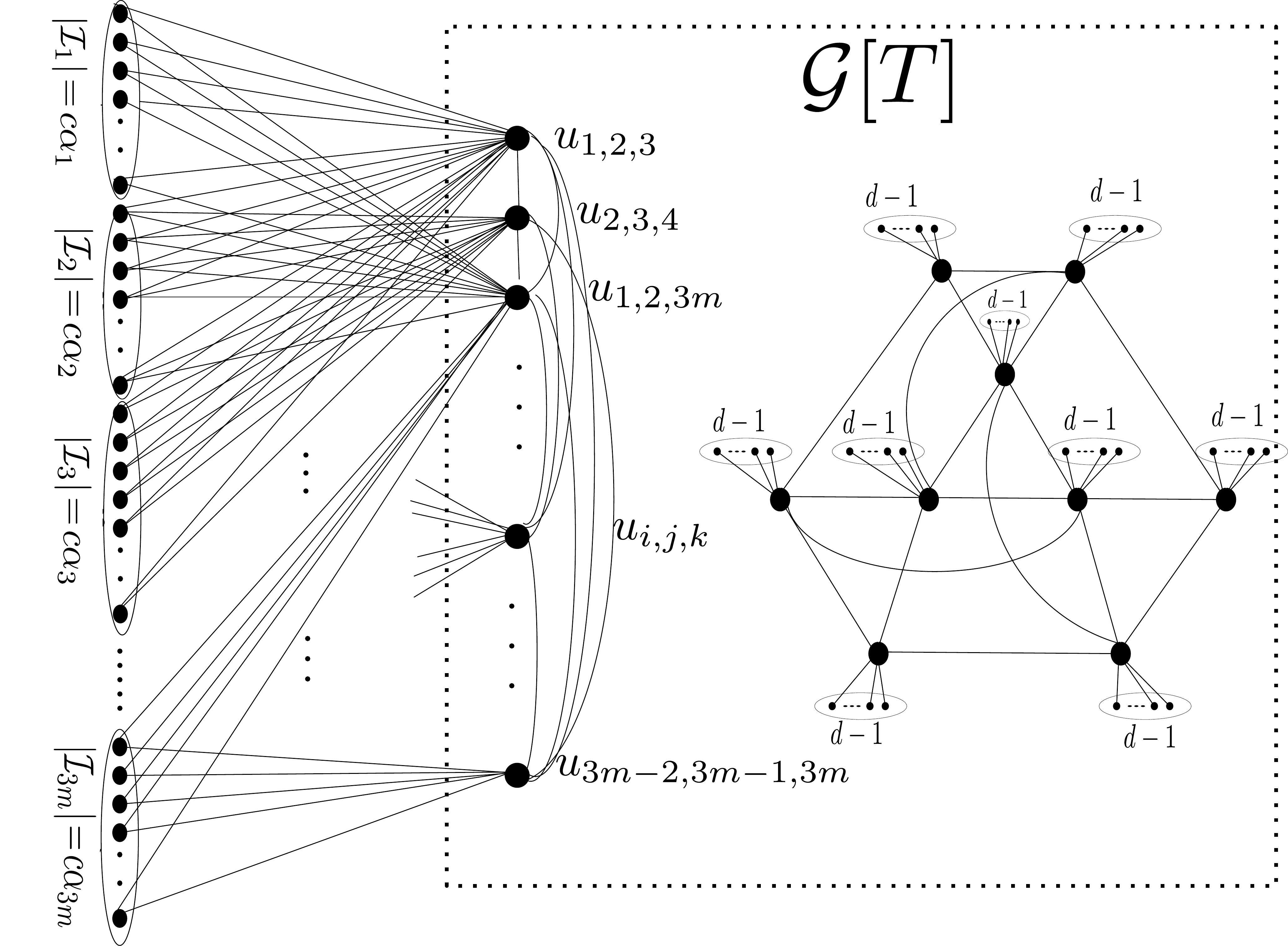}
\end{center}\caption{Graph construction of Theorem \ref{thm:NP-hard}}\label{fig:Complexity-graph}
\end{figure}

Calling the graph in Figure \ref{fig:Complexity-graph} $\mathcal{G}$, setting $T$ to be the set of vertices of the middle and right side graphs, and applying Lemma \ref{Lemma:U-magnified} to construct $\bar{\mathcal{G}}$, we can see that any Nash equilibrium of $\bar{\mathcal{G}}$ is an equilibrium of $\mathcal{G}$ when the strategies of players are restricted to $T$. We claim that $Q$ is an equilibrium for $\mathcal{G}$ with the restricted strategy set $T$ (and equivalently an equilibrium for unrestricted $\bar{\mathcal{G}}$) if and only if there is a 3-partitioning of $\{\alpha_i\}_{i=1}^{3m}$. 

Assume that there is a solution $P_1, \ldots, P_m$ to the 3-partition. For every $1\leq q\leq m$, if $P_q=\{i,j,k\}$, then player $q$ is assigned to $u_{i,j,k}$. Let also assume that player $m+1$ is assigned to one of the nodes in the most right part of the graph, which makes his utility to be $9d$. If player $m+1$ moves to a vertex $u_{i,j,k}$ his utility will be $1<9d$, because all the other $m$ players already covered all the $\sum_{\ell=1}^{3m}ca_{\ell}$ nodes in the most left side of the graph and his movement will not result in any additional payoff for him except producing some gray nodes. Now if one of the players $1\leq q\leq m$ moves from vertex $u_{i,j,k}$ to one of the nodes in the right part of the graph, then his gain can be at most $4d$ which by the selection of $d$ would be less than what he was getting before ($\beta c$). Finally, If player $q$ or equivalently node $u_{i,j,k}$ moves to another node $u_{i',j',k'}$ for some $\{i', j', k'\}\neq \{i, j, k\}$ then since $P_q$ was part of 3-partitioning before, it means that his payoff after deviating will be at most $c\max\{\alpha_i+\alpha_j, \alpha_i+\alpha_k, \alpha_j+\alpha_k\}<c\beta$. Moreover, by Lemma \ref{Lemma:U-magnified} no player at equilibrium will be out of $T$ and hence the proposed profile using the 3-partitioning is an equilibrium.

Now let us suppose that there exists a Nash equilibrium for $\bar{\mathcal{G}}$. We show that it corresponds to a solution of 3-partitioning. First we note that there cannot be two players in the most right part of the graph, otherwise it is not an equilibrium by \cite{takehara2012comment}. Moreover, if there are 3 players or more, one of them can gain at most $3d$. Since in this case there are at most $m-2$ players in the middle part, we can find a set $\{i',j',k'\}$ such that the corresponding set of all the other players does not have any intersection with it. Therefore, if a player with the least gain (at most $3d$) from the right side deviates to $u_{i',j',k'}$ in the middle part, he will gain at least $\frac{3\beta c}{4}$ which is greater than $3d$. Thus the most right part can have either one player or nothing. However since $9d>\frac{3\beta c}{2}$, at least one player would want to move to the most right part of the graph if there is no other player there. Therefore, the most right part of the graph has exactly one player. Thus, the rest of the $m$ players must not only play in $T$ (because of Lemma \ref{Lemma:U-magnified}) but also they must form a partition. Otherwise one of them can move to an appropriate vertex of the middle graph and increase his utility. Finally, in this partitioning, each player must gain exactly $\beta c$, because if this is not true and one of the players namely $u_{i,j,k}$ gets less than $\beta c$, he will gain at most $(\beta-1)c$ (note that $c$ is a rescaling factor) and thus, he can always move to the most right side of the graph and gain $4d>(\beta-1)c$. Thus this partitioning must be a 3-partitioning. This proves the equivalence of the existence of Nash equilibrium in $\bar{\mathcal{G}}$ and existence of 3-partitioning for the set $\{\alpha_i\}_{i=1}^{3m}$.          
\end{proof}

\smallskip
\section{Necessary conditions for Nash equilibrium in Competitive Diffusion Game}\label{sec:necessary}

In this section, we consider the 2-player diffusion game with a single seed placement, present some necessary conditions for existence of pure-strategy Nash equilibrium, and discuss its connection with the network structure. Here, it is worth noting that although the results of this section provide some necessary conditions for a given profile to be a Nash equilibrium, in general they do not guarantee the existence of a Nash equilibrium, that is they are not sufficient. We start with the following theorem which is for the case of two players; however, it can be extended quite naturally to the case of an arbitrary number of players.

\smallskip
\begin{theorem}\label{thm:EQ-bound}
Suppose that $(a^*,b^*)\in V \times V$ is an equilibrium profile for the diffusion game. Then, 
\begin{align}\nonumber
\ceil{\frac{n\!-\!1}{\mathbf{d}(a^*)}}\leq U_B(a^*,b^*), \ \  \ceil{\frac{n\!-\!1}{\mathbf{d}(b^*)}}\leq U_A(a^*,b^*),
\end{align}
where $U_A(a^*,b^*)$ and $U_B(a^*,b^*)$ denote the utilities of players $A$ and $B$ given the initial seed placement at $(a^*,b^*)$, and $\mathbf{d}(a^*)$ and $\mathbf{d}(b^*)$ denote, respectively, the degrees of nodes $a^*$ and $b^*$ in the network $\mathcal{G}$. 
\end{theorem}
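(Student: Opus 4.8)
The plan is to prove only the first inequality, $\ceil{(n-1)/\mathbf{d}(a^*)} \le U_B(a^*,b^*)$; the second follows verbatim by interchanging the roles of $A$ and $B$. Since $(a^*,b^*)$ is a Nash equilibrium, $b^*$ is a best response of $B$ against the fixed seed $a^*$, so $U_B(a^*,b^*) \ge U_B(a^*,b)$ for every $b \in V$. Hence it suffices to exhibit a single alternative seed $v$ for $B$ with $U_B(a^*,v) \ge \ceil{(n-1)/\mathbf{d}(a^*)}$, and this guaranteed value then transfers to the equilibrium utility. I would search for such a $v$ among the $\mathbf{d}(a^*)$ neighbors of $a^*$.

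First I would record the elementary distance fact that every node $i \neq a^*$ lies on a shortest path leaving $a^*$ through one of its neighbors: there is a neighbor $v$ of $a^*$ with $d_{\mathcal{G}}(v,i) = d_{\mathcal{G}}(a^*,i) - 1$, and in particular $d_{\mathcal{G}}(v,i) < d_{\mathcal{G}}(a^*,i)$. Assigning to each of the $n-1$ nodes $i \neq a^*$ exactly one such neighbor partitions these nodes among the $\mathbf{d}(a^*)$ neighbors of $a^*$, so by the pigeonhole principle some neighbor $v^*$ receives at least $\ceil{(n-1)/\mathbf{d}(a^*)}$ of them, and every node $i$ assigned to $v^*$ satisfies $d_{\mathcal{G}}(v^*,i) < d_{\mathcal{G}}(a^*,i)$.

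Next I would invoke Lemma~\ref{lemma:gray-eq} applied to the profile $(a^*, v^*)$, which yields $\{i : d_{\mathcal{G}}(v^*,i) < d_{\mathcal{G}}(a^*,i)\} \subseteq \mathcal{N}_B$; thus all the nodes assigned to $v^*$ are won by $B$. Consequently $U_B(a^*, v^*) = |\mathcal{N}_B| \ge \ceil{(n-1)/\mathbf{d}(a^*)}$, and combining this with the best-response inequality gives $U_B(a^*,b^*) \ge \ceil{(n-1)/\mathbf{d}(a^*)}$, as required.

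The argument is short, and the only point needing care is the pigeonhole accounting: a node possessing several distinct shortest paths out of $a^*$ is eligible for several neighbors, so I would fix a single assignment per node to keep the counts disjoint and summing to exactly $n-1$. The one genuinely substantive ingredient is Lemma~\ref{lemma:gray-eq}, which ensures that a strictly-closer node is actually captured rather than turned gray; this is precisely what converts a purely combinatorial distance count into a lower bound on the \emph{realized} utility. I do not expect a serious obstacle here: the set appearing in the Lemma may be strictly larger than the set of assigned nodes, which only strengthens the bound.
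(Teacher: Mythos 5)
Your proof is correct and takes essentially the same approach as the paper's: both arguments consider deviations of player $B$ to the neighbors of $a^*$, use a pigeonhole count to find a neighbor that is strictly closer than $a^*$ to at least $\ceil{(n-1)/\mathbf{d}(a^*)}$ nodes, invoke Lemma~\ref{lemma:gray-eq} to guarantee those nodes are actually won by $B$, and conclude via the best-response property of the equilibrium. The only cosmetic difference is that you fix a disjoint assignment of nodes to neighbors, while the paper bounds the union of the (possibly overlapping) capture sets $S_i$ by the sum of their cardinalities; the resulting bound is identical.
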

\begin{proof}
Assume that player $A$ and $B$ place their seeds at nodes $a^*$ and $b^*$ and receive payoffs of $U_A(a^*,b^*)$ and $U_B(a^*,b^*)$, respectively. We claim that there exists a neighbor of $a^*$ where player $b^*$ can gain at least $\ceil{\frac{n-1}{\mathbf{d}(a^*)}}$ by deviating to it. Toward showing this, let us also denote all the neighbors of $a^*$ by $v_1,v_2,\ldots, v_{\mathbf{d}(a^*)}$. Let us denote the nodes that adopt $B$ for the initial seed allocation $(a^*,v_i)$ by $S_{i}, i=1,2,\ldots,\mathbf{d}(a^*)$. Then, we have $\cup_{i=1}^{\mathbf{d}(a^*)}S_i=V\setminus \{a^*\}$. In fact, for every $v\in V\setminus \{a^*\}$, the shortest path from $v$ to $a^*$ must pass through at least one of the neighbors of $a^*$ such as $v_{\ell}$. This means that $d_{\mathcal{G}}(v,v_{\ell})<d_{\mathcal{G}}(v,a^*)$ and using Lemma \ref{lemma:gray-eq} we can see that $v\in S_{\ell}$. Therefore we have $n-1=|\cup_{i=1}^{\mathbf{d}(a^*)}S_i|\leq \sum_{i=1}^{\mathbf{d}(a^*)}|S_{i}|$ and this means that there exists at least one $v_{i^*}$ such that $|S_{i^*}|\ge\ceil{\frac{|V\setminus \{a^*\}|}{\mathbf{d}(a^*)}}=\ceil{\frac{n-1}{\mathbf{d}(a^*)}}$. Since we assumed that $(a^*,b^*)$ is an equilibrium, player $b^*$ can not gain more by deviating to $v_{i^*}$. This means that $\ceil{\frac{n\!-\!1}{\mathbf{d}(a^*)}}\leq U_B(a^*,b^*)$ and using the same argument for the other player we get $\ceil{\frac{n\!-\!1}{\mathbf{d}(b)}}\leq U_A(a^*,b^*)$.
\end{proof}

\smallskip
Note that the results in Theorem \ref{thm:EQ-bound} can be improved by noting that the inequality $|\cup_{i=1}^{\mathbf{d}(a^*)}S_i|\leq \sum_{i=1}^{\mathbf{d}(a^*)}|S_{i}|$ can be strict and there are nodes which might be counted in different sets of $S_i$. In fact, it is not hard to see that if a node $v$ belongs to two of these sets such as $S_{j}$ and $S_{k}$, $v$ must be in an even cycle emanating $a^*$ and including the nodes $v_j$ and $v_k$. In such a case, to every cycle of even length which includes $a^*$ and does not contain another smaller cycle, one can associate a node which is counted twice in two different sets. We call such cycles \textit{simple even cycles emanating from $a^*$}. Therefore, we can write
\begin{align}\nonumber
n-1&=|\cup_{i=1}^{\mathbf{d}(a^*)}S_i|\cr 
&\leq \sum_{i=1}^{\mathbf{d}(a^*)}|S_{i}|\!-\!\{\mbox{Simple even cycles emanating from $a^*$}\}|,
\end{align} 
and therefore the bound in the Theorem \ref{thm:EQ-bound} will change to 

\begin{small}
\begin{align}\nonumber
\Big\lceil{\frac{n\!-\!1\!+\!|\{\mbox{Simple even cycles emanating from $a^*$}\}|}{\mathbf{d}(a^*)}\Big\rceil}\!\leq\! U_B(a^*,b^*).  
\end{align}
\end{small}

Next, we consider the following two definitions from graph theory.
\smallskip 
\begin{definition}
An edge (a vertex) of a connected graph $\mathcal{G}$ is a \textit{cut-edge (cut-vertex)} if its removal disconnects the graph. 
\end{definition}

\smallskip
\begin{definition}
A \textit{block} of a graph $\mathcal{G}$ is a maximal connected subgraph of $\mathcal{G}$ that has no cut-vertex.
\end{definition} 

\smallskip
\begin{remark}\label{rem:block-diagram}
Two blocks in a graph share at most one vertex. Hence the blocks of a graph partition its edge set. Furthermore, a vertex shared by two blocks must be a cut-vertex.
\end{remark}  

\bigskip
\begin{theorem}\label{thm:block}
Every pure-strategy Nash equilibrium of a 2-player diffusion game must lie within one of the blocks of its underlying network. 
\end{theorem}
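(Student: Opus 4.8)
The plan is to argue by contradiction: suppose $(a^*,b^*)$ is an equilibrium whose two seeds do not lie in a common block. First I would invoke the block--cut-vertex structure recorded in Remark~\ref{rem:block-diagram}: if $a^*$ and $b^*$ share no block, then some cut-vertex $c\notin\{a^*,b^*\}$ separates them, so that $a^*$ and $b^*$ fall into distinct connected components $C_a\ni a^*$ and $C_b\ni b^*$ of $\mathcal{G}-c$ (there may be further components as well). The single geometric fact driving the whole argument is that $c$ lies on every path between the two sides: for any vertex $v\notin C_b$ one has $d_{\mathcal{G}}(b^*,v)=d_{\mathcal{G}}(b^*,c)+d_{\mathcal{G}}(c,v)>d_{\mathcal{G}}(c,v)$, and symmetrically $d_{\mathcal{G}}(a^*,v)>d_{\mathcal{G}}(c,v)$ for every $v\notin C_a$.

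Next I would analyze the single deviation in which player $A$ relocates her seed from $a^*$ to the cut-vertex $c$. By the displayed distance identity, every vertex $v\notin C_b$ satisfies $d_{\mathcal{G}}(c,v)<d_{\mathcal{G}}(b^*,v)$, so Lemma~\ref{lemma:gray-eq} forces all such $v$ to adopt type $A$; hence $U_A(c,b^*)\ge n-|C_b|$. The mirror-image deviation of player $B$ to $c$ gives $U_B(a^*,c)\ge n-|C_a|$. Since $(a^*,b^*)$ is assumed to be an equilibrium, neither deviation can be profitable, which yields the two inequalities $|C_b|\ge n-U_A(a^*,b^*)$ and $|C_a|\ge n-U_B(a^*,b^*)$.

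Finally I would add these two inequalities and use the conservation identity $U_A(a^*,b^*)+U_B(a^*,b^*)+|\mathcal{N}|=n$, where $\mathcal{N}$ is the set of gray/white nodes, to obtain $|C_a|+|C_b|\ge 2n-(U_A+U_B)=n+|\mathcal{N}|\ge n$. This contradicts the elementary bound $|C_a|+|C_b|\le n-1$, since the vertex $c$ (and any further components of $\mathcal{G}-c$) is excluded from $C_a\cup C_b$. The contradiction shows that in any equilibrium no cut-vertex can separate the seeds, i.e.\ $a^*$ and $b^*$ must lie in a common block.

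The step I expect to be delicate is making the two deviation bounds tight enough to survive the presence of \emph{equidistant} vertices. A ``move to $c$'' argument applied to a single player can fail exactly when $d_{\mathcal{G}}(a^*,c)=d_{\mathcal{G}}(b^*,c)$, because the contested boundary vertices on the other side turn gray rather than switching allegiance, so one cannot directly conclude that that player strictly improves. The device that circumvents this is precisely to run the argument symmetrically for both players and \emph{add} the resulting inequalities: the gray-node count $|\mathcal{N}|$ then enters with a favorable sign and produces the contradiction uniformly, regardless of the relative distances of $a^*$ and $b^*$ to the separating cut-vertex.
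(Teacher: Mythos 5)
Your proof is correct, and it takes a genuinely different route from the paper's. The paper fixes a shortest path between $a^*$ and $b^*$, lets $v_1$ and $v_2$ be the cut vertices at which this path exits the blocks containing $a^*$ and $b^*$, and then splits into the cases $v_1=v_2$ and $v_1\neq v_2$; in each case it exhibits, for one specific player, a deviation to a cut vertex that is \emph{strictly} profitable (the deviating player keeps every previously adopted node and gains at least the cut vertex itself), which is exactly where the equidistant/gray-node bookkeeping you worried about has to be carried out. You avoid that case analysis entirely: you take a single separating cut vertex $c\notin\{a^*,b^*\}$, use Lemma~\ref{lemma:gray-eq} to bound both players' deviation payoffs at $c$ from below by $n-|C_b|$ and $n-|C_a|$ respectively, and then derive the contradiction by summing the two equilibrium inequalities against the trivial bounds $U_A+U_B\le n$ and $|C_a|+|C_b|\le n-1$. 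What your argument buys is robustness: no single player ever needs to be shown to improve strictly, so the configurations with $d_{\mathcal{G}}(a^*,c)=d_{\mathcal{G}}(b^*,c)$ that force the paper into its two cases require no separate treatment, and the argument goes through verbatim when $\mathcal{G}-c$ has more than two components. What the paper's argument buys is constructive information: it identifies which player can profitably deviate and to which vertex, whereas your counting argument only certifies that the profile cannot be an equilibrium. One small point worth making explicit in a final write-up: the existence of a separating cut vertex $c\notin\{a^*,b^*\}$ when $a^*$ and $b^*$ share no block is the standard characterization that two vertices lie in a common block if and only if no single vertex separates them (obtained, e.g., from the block--cut tree by choosing the blocks containing $a^*$ and $b^*$ closest to each other); it deserves a one-line justification alongside Remark~\ref{rem:block-diagram}, which by itself only records that blocks pairwise share at most one vertex.
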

\begin{proof}
Given a network $\mathcal{G}=(V,\mathcal{E})$ with block decomposition $\mathcal{B}_1, \mathcal{B}_2,\ldots, \mathcal{B}_k$, let us denote one of the pure-strategy Nash equilibria of the diffusion game on $\mathcal{G}$ by $(a^*, b^*)\in V\times V$. If $a^*, b^* \in \mathcal{B}_i$, for some $i\in [k]$, then there is nothing to prove. Otherwise, without any loss of generality, let us assume that $a^*\in \mathcal{B}_1$ and $b^*\in \mathcal{B}_2$. Starting from node $a^*$ and moving along a shortest path $\mathcal{P}$ between $a^*$ and $b^*$, the path must exit block $\mathcal{B}_1$ for the first time at some vertex $v_1$. Clearly by Remark \ref{rem:block-diagram} such a vertex must be a cut vertex since it is shared between two blocks. By a similar argument, but this time by starting from node $b^*$ and moving along the path $\mathcal{P}$ we can see that the path $\mathcal{P}$ must exit block $\mathcal{B}_2$ through a cut vertex $v_2$. Next we consider two cases:
\begin{itemize}
\item $v_1=v_2:=v$. In this case, we can assume that neither $a^*=v$ nor $b^*=v$, otherwise $(a^*,b^*)$ would already be an equilibrium within one block (either $\mathcal{B}_1$ or $\mathcal{B}_2$). Now, without any loss of generality let us assume that the vertex $v$ does not adopt type $A$ for the seed placement $(a^*, b^*)$. Then, the first player can strictly increase its utility by removing its seed from $a^*$ and placing it in node $v$. In fact, placing the initial seeds on $(v,b^*)$ instead of $(a^*,b^*)$, the first player not only can adopt all the nodes that he was able to adopt in $(a^*, b^*)$ (this is due to the fact that there is no edge between vertices of two distinct blocks (Remark \ref{rem:block-diagram})), but also he can adopt at least one more new node, which is node $v$. This is in contrast with $(a^*,b^*)$ being a Nash equilibrium. Hence $(a^*, b^*)$ must lie within one block. 
\item $v_1\neq v_2$. In this case, by the definition of the cut vertices $v_1$ and $v_2$ we clearly have $d_{\mathcal{G}}(a^*,v_1)<d_{\mathcal{G}}(a^*,v_2)$, and $d_{\mathcal{G}}(b^*,v_2)<d_{\mathcal{G}}(b^*,v_1)$ (Figure \ref{fig:block-diagram}). Now, if $v_2$ adopts type $A$, by Lemma \ref{lemma:gray-eq} it is straightforward to see that $v_1$ adopts type $A$ as well. But then by the same argument as in the first case, the second player can move its seed from $b^*$ to $v_1$ and strictly increase its utility. On the other hand, if $v_2$ does not adopt type $A$, then the first player can move its seed from $a^*$ to $v_2$, and strictly increase its utility. This is in contradiction with $(a^*,b^*)$ being a Nash equilibrium, and hence $(a^*, b^*)$ must lie within one block.
\end{itemize}
\end{proof}

\begin{figure}[htb]
\vspace{-3.5cm} 
\begin{center}
\hspace{-3cm}\includegraphics[totalheight=.25\textheight,
width=.29\textwidth,viewport=0 0 800 800]{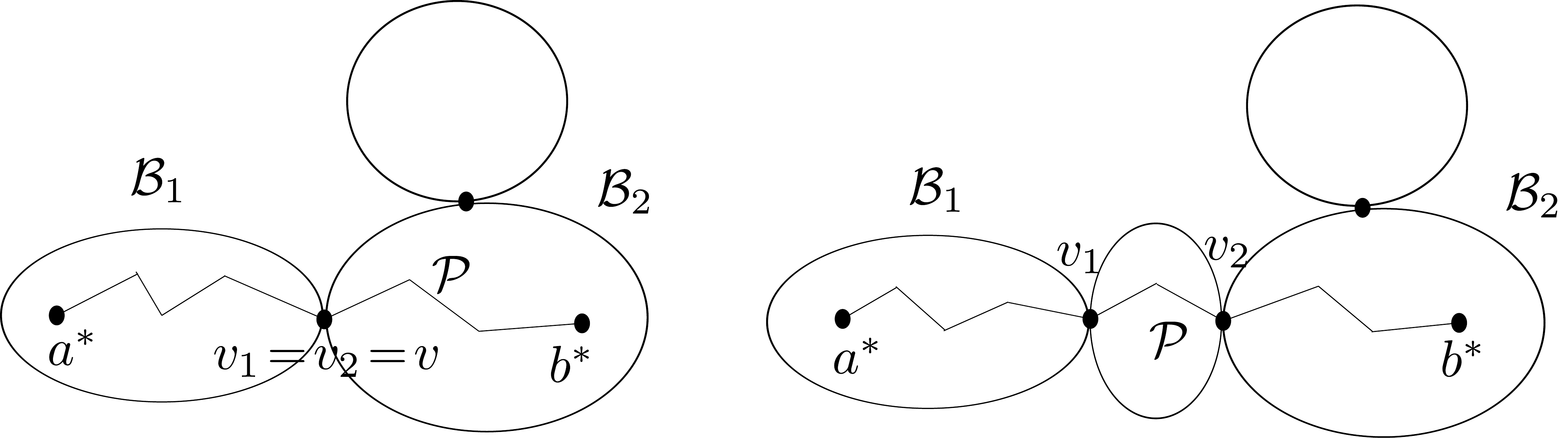}
\end{center}\caption{Illustration of block diagrams in Theorem \ref{thm:block}.}\label{fig:block-diagram}
\end{figure}

We note that Theorem \ref{thm:block} breaks down the complication of search of a Nash equilibrium over the entire graph to only its blocks. We further note that there are efficient time algorithms which can decompose a graph into its blocks in at most $O(n^2)$ steps, where $n$ is the number of the nodes in the graph \cite{paton1971algorithm}.

\section{Diffusion Game over Random Graphs}\label{sec:diffusion-game-random-graphs}

Social networks that are observed in the real world, can be viewed as a single realization of an underlying stochastic process. This line of thinking has generated a huge interest in modeling real world social networks using random networks, which are highly structured while being essentially random. This is in fact a property which emerges in many real world social networks. One of the most well-known random structures in this context is the Erdos-Renyi graph $G(n,p)$, where there are $n$ nodes and the edges emerge independently with probability $p\in (0,1)$.         

In this section we consider a two player diffusion game with single seed placement over the Erdos-Renyi graph $\mathcal{G}(n,p)$. It is a well-known fact \cite{bollobas1998random} that $p(n)=\frac{\ln n}{n}$ is a threshold function for the connectivity of the random graph $G(n,p)$. In particular, for $p\ge \frac{\ln n}{n}$, almost surely there does not exist any isolated vertex in $G(n,p)$, as $n\to \infty$. On the other hand, it was shown earlier \cite{bollobas1998random,alon2010note}, that $p(n)=\sqrt{\frac{2\ln n}{n}}$ is a threshold function for having diameter 2 in $G(n,p)$. In particular, for $p(n)\ge \sqrt{\frac{c\ln n}{n}}, c>2$ almost all the nodes in $G(n,p)$ lie withing graphical distance of at most 2 from each other, which results in some straight forward analysis of the diffusion game over such graphs. Therefore, in this section we confine our attention to the more interesting region where $p\in (\frac{\ln n}{n}, \sqrt{\frac{2\ln n}{n}})$. 

\smallskip
For any arbitrary but fixed node $x$ and any realization $\mathcal{G}$ of $G(n,p)$, we let $S_{\mathcal{G}}(i)$ and $B_{\mathcal{G}}(i)$ be the set of all the nodes which are, respectively,  at graphical distances of exactly $i$, and at most $i$ from node $x$. Similarly, we define $S(i)$ and $B(i)$ be two random sets denoting, respectively, the set of nodes of distances exactly $i$, and at most $i$ from node $x$ when the underlying graph is a random graph $G(n,p)$. Now we have the following lemma. 

\smallskip
\begin{lemma}\label{lemm:S-B-relation}
For an arbitrary $\lambda>(n-1)p$, and any $i\in [n]$ we have 
\begin{align}\nonumber
\mathbb{P}(|S(i)|\ge \lambda |B(i-1)|)\leq ne^{-\frac{(\lambda-(n-1)p)^2}{3(n-1)p}}.
\end{align} 
\end{lemma}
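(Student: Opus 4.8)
The plan is to analyze the shells $S(i)$ through a breadth-first exploration of $G(n,p)$ and then apply a Chernoff-type tail bound conditionally on the ball $B(i-1)$. First I would record the structural description of a shell: a vertex $v\in V\setminus B(i-1)$ lies in $S(i)$ if and only if $v$ has an edge to $S(i-1)$. Indeed, if $v\notin B(i-1)$ then $v$ has no neighbour in $B(i-2)$ (otherwise $d_{\mathcal{G}}(x,v)\le i-1$), so any edge from $v$ into $B(i-1)$ must land in $S(i-1)$, and a single such edge places $v$ at distance exactly $i$. Thus $S(i)=\{v\in V\setminus B(i-1): v\sim S(i-1)\}$.

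The key point I would then establish is conditional independence via edge-exposure: reveal the edges of $G(n,p)$ in breadth-first order, so that determining $B(i-1)$ (together with the induced distances) uses only edges incident to $B(i-2)$ and leaves every edge between the current boundary $S(i-1)$ and the unexplored set $V\setminus B(i-1)$ still ``fresh'', i.e.\ an independent $\mathrm{Bernoulli}(p)$. Conditioned on $B(i-1)=B$ with $|B|=b$ and $|S(i-1)|=s$, each $v\in V\setminus B$ independently belongs to $S(i)$ with probability $q=1-(1-p)^{s}$, the relevant edge-sets $\{vu:u\in S(i-1)\}$ being pairwise disjoint across $v$. Hence, conditionally, $|S(i)|$ is a sum of $n-b$ independent $\mathrm{Bernoulli}(q)$ variables.

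Next I would bound the conditional mean. Using $q=1-(1-p)^s\le sp\le bp$ together with $s\le b$, the conditional mean is $\mu=(n-b)q\le (n-b)bp\le (n-1)\,b\,p$, where the last step uses $b\ge 1$ (which holds since $x\in B(i-1)$ for $i\ge 1$). Because $\lambda>(n-1)p$, we have $\lambda b>\mu$, so we are genuinely in the upper tail. Applying the multiplicative Chernoff bound to this conditional sum with threshold $\lambda b$ and mean at most $(n-1)pb$, the deviation is $(\lambda-(n-1)p)b$, and I expect to obtain
\begin{align}\nonumber
\mathbb{P}\big(|S(i)|\ge \lambda b \,\big|\, B(i-1)=B\big)\le \exp\!\Big(-\tfrac{(\lambda-(n-1)p)^2\,b}{3(n-1)p}\Big).
\end{align}
Since $b\ge 1$, each conditional bound is at most $\exp(-(\lambda-(n-1)p)^2/(3(n-1)p))$. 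Writing $\mathbb{P}(|S(i)|\ge\lambda|B(i-1)|)=\sum_{b=1}^{n}\mathbb{P}(|B(i-1)|=b,\,|S(i)|\ge\lambda b)$ and bounding each summand by the corresponding conditional probability (dropping the weights $\mathbb{P}(|B(i-1)|=b)\le 1$) produces the sum of $n$ identical terms, which yields the stated inequality; the prefactor $n$ is thus a deliberately conservative cushion.

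The main obstacle is the probabilistic bookkeeping in the second step: one must set up the breadth-first edge-exposure so that conditioning on $B(i-1)$ (equivalently, on the events $v\notin B(i-1)$) touches no edge between $S(i-1)$ and $V\setminus B(i-1)$, which is precisely what makes $|S(i)|$ a sum of \emph{independent} Bernoulli variables and legitimizes the Chernoff step. A secondary technical point is the choice of Chernoff form and constant: the clean exponent with denominator $3(n-1)p$ is the standard bound in the moderate-deviation range $\lambda\le 2(n-1)p$, while the regime of very large $\lambda$ is absorbed by the same edge-exposure picture together with the extra factor $n$, so both issues are settled by the single construction above.
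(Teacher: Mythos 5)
Your argument is correct and reaches the stated bound, but it takes a genuinely different route from the paper's. The paper's proof is deterministic-combinatorial up front: since every vertex of $S(i)$ has at least one neighbour in $B(i-1)$, the event $|S(i)|\ge \lambda |B(i-1)|$ forces, by pigeonhole, some vertex $v\in B(i-1)$ with degree $\mathbf{d}(v)\ge |S(i)|/|B(i-1)|\ge \lambda$; the paper then applies a union bound over all $n$ vertices and a Chernoff bound to the \emph{unconditional} degree $\mathbf{d}(v)$, which is a plain sum of $n-1$ independent Bernoulli$(p)$ variables. This sidesteps entirely your main technical burden, the breadth-first edge-exposure argument needed to make $|S(i)|$ a conditional binomial, and the factor $n$ arises there as a union bound over vertices rather than as your sum over ball sizes $b$. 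Your approach costs more probabilistic bookkeeping but buys more: the conditional bound you derive carries an extra factor of $b$ in the exponent, and had you kept the weights $\mathbb{P}(|B(i-1)|=b)$ rather than discarding them, you would have obtained the lemma \emph{without} the prefactor $n$, i.e., a strictly stronger statement. Both proofs share the same soft spot: the exponent $e^{-(\lambda-(n-1)p)^2/(3(n-1)p)}$ is the moderate-deviation form of the Chernoff bound (valid when $\lambda$ is at most roughly $2(n-1)p$), and neither the paper's proof nor your closing remark that the large-$\lambda$ regime is ``absorbed by the extra factor $n$'' actually handles deviations beyond that range; since this gap is inherited from the paper itself, it does not count against your proposal relative to the paper's own standard of rigor.
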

\begin{proof}
\begin{align}\nonumber
&\mathbb{P}\Big(|S(i)|\ge \lambda|B(i-1)|\Big)=\mathbb{P}\left(\frac{|S(i)|}{|B(i-1)|}\ge\lambda\right)\cr 
&\qquad\leq \mathbb{P}\Big(\exists v\in B(i-1): \mathbf{d}(v)\ge \lambda\Big)\leq \mathbb{P}\Big(\exists v: \mathbf{d}(v)\ge \lambda\Big)\cr &\qquad\leq n\mathbb{P}(\mathbf{d}(v)\!\ge\! \lambda)=n\mathbb{P}\Big(\mathbf{d}(u)\!-\!(n\!-\!1)p\ge \lambda\!-\!(n\!-\!1)p\Big)\cr 
&\qquad\leq ne^{-\frac{(\lambda-(n-1)p)^2}{3(n-1)p}},
\end{align}
where in the first inequality we have used the fact that every node in $S(i)$ must be connected to at least one vertex in $B(i-1)$. Therefore, there exists at least one vertex in $B(i-1)$ whose degree is greater than or equal to $\frac{|S(i)|}{|B(i-1)|}$. Finally, in the last inequality we have used the Chernoff bound for the random variable $\mathbf{d}(v)$. Note that $\mathbf{d}(v)$ is the sum of $(n-1)$ independent Bernoulli random variables with equal probability of occurrence $p$.     
\end{proof} 

\smallskip
Next we consider the following definition.

\begin{definition}
A martingale is a sequence $X_0, X_1,\ldots, X_m$ of random variables, where for $0\leq i< m$, we have $\mathbb{E}[X_{i+1}|X_i,X_{i-1},\ldots,X_0]=X_i$. It is easy to see that if $X_0, X_1,\ldots, X_m$ is a martingale, then $\mu:=X_0=\mathbb{E}[X_i], \forall i=1,2\ldots,m.$     
\end{definition}

\smallskip
\begin{lemma}\label{lemm:Azuma}[Azuma's Inequality \cite{alon2004probabilistic}]
Let $\mu=X_0, X_1,\ldots, X_m$ be a martingale with $|X_{i+1}-X_i|\leq 1$ for all $0\leq i< m$. Then 
\begin{align}\nonumber
\mathbb{P}\Big(|X_m-\mu|>\sqrt{m}\theta\Big)\leq 2e^{-\frac{\theta^2}{2}}.
\end{align}
\end{lemma}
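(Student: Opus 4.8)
The plan is to prove Azuma's inequality by the exponential-moment (Chernoff-type) method applied to the martingale difference sequence. First I would set $Y_i := X_i - X_{i-1}$ for $i=1,\ldots,m$, so that the hypothesis reads $|Y_i|\le 1$ while the martingale property becomes $\mathbb{E}[Y_i \mid X_{i-1},\ldots,X_0]=0$. Since $X_m-\mu=\sum_{i=1}^m Y_i$, the idea is to bound the upper and lower tails of this sum separately and then combine them by a union bound.

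The key technical step, and the part I expect to be the main obstacle, is a sharp one-step exponential estimate: for any real $s$ and any increment $Y$ with $|Y|\le 1$ and conditional mean zero, one has $\mathbb{E}[e^{sY}\mid X_{i-1},\ldots,X_0]\le e^{s^2/2}$. I would obtain this from convexity of $t\mapsto e^{st}$ on $[-1,1]$: writing $Y$ as the convex combination $\frac{1+Y}{2}(1)+\frac{1-Y}{2}(-1)$ of the endpoints gives $e^{sY}\le \frac{1+Y}{2}e^{s}+\frac{1-Y}{2}e^{-s}=\cosh s + Y\sinh s$, whereupon taking the conditional expectation annihilates the $Y$ term and the elementary bound $\cosh s\le e^{s^2/2}$ closes the estimate. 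This is the only place where $|Y_i|\le 1$ enters, and pinning down the constant here is exactly what produces the clean $e^{-\theta^2/2}$ in the conclusion.

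Next I would assemble the full exponential moment by iterated conditioning via the tower property. Fixing $s>0$ and conditioning on the first $m-1$ increments,
\[
\mathbb{E}\big[e^{s\sum_{i=1}^m Y_i}\big]=\mathbb{E}\Big[e^{s\sum_{i=1}^{m-1}Y_i}\,\mathbb{E}\big[e^{sY_m}\mid X_{m-1},\ldots,X_0\big]\Big]\le e^{s^2/2}\,\mathbb{E}\big[e^{s\sum_{i=1}^{m-1}Y_i}\big],
\]
and iterating $m$ times yields $\mathbb{E}[e^{s(X_m-\mu)}]\le e^{ms^2/2}$. Markov's inequality applied to the nonnegative variable $e^{s(X_m-\mu)}$ then gives $\mathbb{P}(X_m-\mu>\sqrt{m}\,\theta)\le e^{-s\sqrt{m}\,\theta+ms^2/2}$ for every $s>0$.

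Finally I would optimize over the free parameter $s$. Minimizing the exponent $-s\sqrt{m}\,\theta+ms^2/2$ gives $s=\theta/\sqrt{m}$ with minimal value $-\theta^2/2$, so $\mathbb{P}(X_m-\mu>\sqrt{m}\,\theta)\le e^{-\theta^2/2}$. Running the identical argument on the sequence $-X_0,-X_1,\ldots,-X_m$, which is again a martingale with increments bounded by $1$, controls the lower tail $\mathbb{P}(X_m-\mu<-\sqrt{m}\,\theta)\le e^{-\theta^2/2}$, and a union bound over the two one-sided events supplies the factor of $2$ and completes the proof.
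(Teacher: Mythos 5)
Your proof is correct. Note, however, that the paper itself gives no proof of this lemma: it is invoked as a black-box result with a citation to Alon and Spencer \cite{alon2004probabilistic}, so there is no internal argument to compare against. What you have written is exactly the standard proof found in that reference: the one-step Hoeffding bound $\mathbb{E}\big[e^{sY}\mid X_{i-1},\ldots,X_0\big]\leq \cosh s\leq e^{s^2/2}$ obtained from convexity of $t\mapsto e^{st}$ on $[-1,1]$, the iterated tower-property estimate $\mathbb{E}\big[e^{s(X_m-\mu)}\big]\leq e^{ms^2/2}$, Markov's inequality with the optimal choice $s=\theta/\sqrt{m}$ yielding the one-sided bound $e^{-\theta^2/2}$, and a union bound over the two tails supplying the factor $2$. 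All steps are sound, and the constant in the exponent is tracked correctly, so your write-up would serve as a legitimate self-contained replacement for the citation.
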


\smallskip
Now we are ready to state the main result of this section.

\smallskip
\begin{theorem}
For any arbitrary constant $\alpha\in(0,1)$, let $p\in[\frac{\ln n}{n},\sqrt{\frac{\ln n}{n^{1+\alpha}}}]$. Then, as $n\to\infty$, for every random seed placement in the 2-player diffusion game over $G(n,p)$ with single seed, we have $\mathbb{E}[U_A]=\mathbb{E}[U_B]\ge \frac{1}{5p}$. Moreover, as $n\to\infty$ almost surely we have $\frac{U_A}{\mathbb{E}[U_A]}=\frac{U_B}{\mathbb{E}[U_B]}\to 1$.   
\end{theorem}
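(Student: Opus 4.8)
The statement splits into the expectation bound $\mathbb{E}[U_A]=\mathbb{E}[U_B]\ge \frac{1}{5p}$ and the almost-sure ratio convergence. The equality is pure symmetry: the law of $G(n,p)$ is invariant under vertex relabellings, so the transposition swapping the two seeds maps $U_A$ to $U_B$ in distribution; in particular, for every fixed $v$, $\mathbb{P}(d_{\mathcal G}(a,v)<d_{\mathcal G}(b,v))=\mathbb{P}(d_{\mathcal G}(b,v)<d_{\mathcal G}(a,v))$. Writing $B_a(r):=\{v:d_{\mathcal G}(a,v)\le r\}$, Lemma~\ref{lemma:gray-eq} gives the instance-wise inequality $U_A\ge |B_a(r)\setminus B_b(r)|$ for every radius $r$, since any $v\in B_a(r)\setminus B_b(r)$ obeys $d_{\mathcal G}(a,v)\le r<d_{\mathcal G}(b,v)$ and hence lies in $\mathcal N_A$. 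Taking expectations gives $\mathbb{E}[U_A]\ge \mathbb{E}|B_a(r)|-\mathbb{E}|B_a(r)\cap B_b(r)|$, which I would use with a single well-chosen $r$.

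For the lower bound I would take $r$ to be the largest integer with $\mathbb{E}|B_a(r)|<n/2$. Because $p\ge \frac{\ln n}{n}$ the graph is a.s.\ connected, so the breadth-first ball around $a$ eventually exhausts $V$; Lemma~\ref{lemm:S-B-relation}, applied with $\lambda=(n-1)p(1+o(1))$ and a union bound over the $n$ layers, shows that with probability $1-o(1)$ each layer satisfies $|S(i)|\le\lambda|B(i-1)|$, so one exploration step multiplies the ball by at most $1+\lambda$. Feeding this (and the negligible contribution of the exceptional event) into expectations gives $\mathbb{E}|B_a(r)|\ge \frac{\mathbb{E}|B_a(r+1)|-o(1)}{1+\lambda}\ge \frac{n/2-o(1)}{1+\lambda}$, of order $\frac1p$, while by the choice of $r$ the density $q:=\mathbb{E}|B_a(r)|/n$ satisfies $q<\frac12$. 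The overlap term would then be controlled by showing that the increasing events $\{d_{\mathcal G}(a,w)\le r\}$ and $\{d_{\mathcal G}(b,w)\le r\}$ are only weakly correlated, $\mathbb{P}(d_{\mathcal G}(a,w)\le r,\ d_{\mathcal G}(b,w)\le r)\le(1+o(1))q^2$ for a typical $w$; summing gives $\mathbb{E}|B_a(r)\cap B_b(r)|\le(1+o(1))q\,\mathbb{E}|B_a(r)|$, so $\mathbb{E}[U_A]\ge (1-q)(1-o(1))\,\mathbb{E}|B_a(r)|$. This is again of order $\frac1p$, and once the constants from the maximum degree are tracked ($1+\lambda=(1+o(1))np$ when $np\gg\ln n$, and $O(np)$ at the connectivity threshold) it exceeds $\frac{1}{5p}$.

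For the concentration I would run the vertex-exposure (Doob) martingale $X_i=\mathbb{E}[U_A\mid \text{edges incident to vertices } 1,\dots,i]$, so that $X_0=\mathbb{E}[U_A]$ and $X_n=U_A$. Let $c=\max_i\sup|X_i-X_{i-1}|$ be the largest effect of one vertex's edges; applying Azuma's inequality (Lemma~\ref{lemm:Azuma}) to $X_i/c$ gives $\mathbb{P}(|U_A-\mathbb{E}[U_A]|>c\sqrt n\,\theta)\le 2e^{-\theta^2/2}$. Setting the deviation to $\epsilon\,\mathbb{E}[U_A]$ and using $\mathbb{E}[U_A]\ge\frac1{5p}$ forces $\theta\ge \frac{\epsilon}{5cp\sqrt n}$. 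Here the upper endpoint of the range is decisive: $p\le\sqrt{\frac{\ln n}{n^{1+\alpha}}}$ gives $p\sqrt n\le\sqrt{\frac{\ln n}{n^{\alpha}}}\to0$, so provided $c$ grows slower than $n^{\alpha/2}$ we get $\theta\to\infty$ with $\theta^2\ge\frac{\epsilon^2 n^{\alpha}}{25c^2\ln n}$; the resulting tail $2e^{-\theta^2/2}$ is summable in $n$, and Borel--Cantelli upgrades this to $U_A/\mathbb{E}[U_A]\to1$ almost surely, and likewise for $B$.

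The main obstacle is the control of the martingale increment $c$. A naive bound is useless: changing the edges at one vertex can reroute the shortest-path boundary between the two seeds and thereby flip the type of $\Theta(n)$ vertices, so the worst-case Lipschitz constant of $U_A$ is linear in $n$. Making Azuma work therefore requires confining the argument to a high-probability ``regular'' event — on which the graph is locally tree-like (this is exactly where $np^2\to0$ re-enters) and the influence of a single vertex is polynomially small — together with a bounded-difference inequality that tolerates a negligible exceptional set. The secondary technical point is the weak-correlation estimate on the two balls in the overlap step, which must be proved by conditioning on the breadth-first neighbourhood of $w$ rather than by monotonicity, since both events are increasing and hence positively correlated in the wrong direction for a crude FKG bound.
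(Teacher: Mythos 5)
Your route to the expectation bound is genuinely different from the paper's, and it has two concrete gaps. First, the overlap estimate $\mathbb{P}(d_{\mathcal G}(a,w)\le r,\ d_{\mathcal G}(b,w)\le r)\le(1+o(1))q^2$ is the load-bearing step and is never proved; as you note yourself, both events are increasing, so Harris/FKG gives the \emph{opposite} inequality, and an upper bound of this type requires a genuine conditional-exploration argument that your proposal only gestures at. The paper never needs any two-ball correlation estimate: it conditions on the realized graph $\mathcal{G}$ and uses the randomness of the uniform seed pair, so that the probability that a fixed vertex $x$ stays uninfected is at most $\sum_i|S_{\mathcal G}(i)|^2/n^2$ (both seeds must tie in distance to $x$, by Lemma \ref{lemma:gray-eq}), while the probability that $x$ is infected is at least $\sum_{i\ne j}|S_{\mathcal G}(i)||S_{\mathcal G}(j)|/n^2=2\sum_i|S_{\mathcal G}(i)||B_{\mathcal G}(i-1)|/n^2$; Lemma \ref{lemm:S-B-relation} then bounds the ratio of these by roughly $\lambda/2$ with $\lambda=(1+\sqrt{15})np$, giving $\mathbb{P}(I(x))\ge 2/(2+\lambda)$ and $\mathbb{E}[U_A]=\tfrac12\sum_x\mathbb{P}(I(x))\ge n/(2+\lambda+o(np))\ge\tfrac1{5p}$ precisely because $1+\sqrt{15}<5$. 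Second, even granting your correlation estimate, your accounting cannot reach the stated constant at the bottom of the range $p=\tfrac{\ln n}{n}$. The set $B_a(r)\setminus B_b(r)$ for a single $r$ captures only part of $\mathcal N_A$, and your bound tops out at roughly $\tfrac{n/2}{1+\lambda}$ -- a factor $2$ below the paper's $\tfrac{n}{2+\lambda}$, whose identity $\sum_{i\ne j}|S(i)||S(j)|=2\sum_i|S(i)||B(i-1)|$ integrates over all radii and both orderings of the seeds. At $np=\ln n$, even a single application of Lemma \ref{lemm:S-B-relation} forces $\lambda\ge(1+\sqrt3)np$ (the paper uses $(1+\sqrt{15})np$), so your method yields at best about $\tfrac1{2(1+\sqrt3)p}\approx\tfrac1{5.5p}<\tfrac1{5p}$; your parenthetical $1+\lambda=(1+o(1))np$ holds only when $np\gg\ln n$, which excludes the endpoint the theorem covers. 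Hence the sentence ``once the constants are tracked it exceeds $\tfrac1{5p}$'' is false as stated.

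On concentration, your skeleton (vertex-exposure martingale, Azuma, tails summable for $\theta$ of order $\ln n$, Borel--Cantelli) is exactly the paper's, and the range assumption enters identically through $p\sqrt n\to 0$. The obstacle you flag -- that the worst-case vertex-Lipschitz constant of $U_A$ is $\Theta(n)$, since rewiring one vertex can move the tie surface between the seeds and flip the type of linearly many nodes -- is real; the paper glosses over it by asserting $|X_{i+1}-X_i|\le1$ outright, so on this point your write-up is more careful than the source. But flagging the obstacle is not overcoming it: your proposal defers the required typical-bounded-differences argument (restriction to a regular event plus control of the exceptional set) without supplying it, so the concentration half of your proof is also incomplete as written.
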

\begin{proof}
For an arbitrary but fixed node $x$, let $I(x)$ be the event that in the random graph $G(n,p)$ with uniform seed placements at $(a,b)\in V\times V$, node $x$ gets infected (adopts either of the two types), and we denote its complement by $I^c(x)$. Conditioning on the graph $G(n,p)=\mathcal{G}$, we can write
\begin{align}\label{eq:two-random-inequalities}
&\mathbb{P}(I(x)|G(n,p)=\mathcal{G})\ge \frac{\sum_{i\neq j}|S_{\mathcal{G}}(i)||S_{\mathcal{G}}(j)|}{n^2}\cr 
&\mathbb{P}(I^c(x)|G(n,p)=\mathcal{G})\leq \frac{\sum_{i=1}^{n}|S_{\mathcal{G}}(i)|^2}{n^2}, 
\end{align}
where the first inequality is due to the fact that if the seed nodes lie in different sets $S_{\mathcal{G}}(i), S_{\mathcal{G}}(i), i\neq j$ from node $x$, then by Lemma \ref{lemma:gray-eq} node $x$ will adopt one of the types. Moreover, for the second inequality and using Lemma \ref{lemma:gray-eq}, one can see that the set of uninfected or gray nodes is a subset of vertices which are within equal distance from the seed nodes. 

Now let us define the following sets of graphs;
\begin{align}\nonumber
&\mathcal{K}=\{\mathcal{G}:\mathbf{d}_{\mathcal{G}}(v)>0, \forall v\}\cr 
&\mathcal{F}=\{\mathcal{G}: |S_{\mathcal{G}}(i)|<\lambda |B_{\mathcal{G}}(i-1)|, \forall i\in [n]\}.
\end{align}
By combining the two inequalities in \eqref{eq:two-random-inequalities} and taking the average over the probability space of all possible graphs with distribution $G(n,p)$, we can write

\begin{align}\label{eq:random-I-vs-I^c}
\frac{\mathbb{P}(I^c(x))}{\mathbb{P}(I(x))}&=\frac{\sum_{\mathcal{G}}\mathbb{P}(\mathcal{G})\mathbb{P}(I^c(x)|G(n,p)=\mathcal{G})}{\sum_{\mathcal{G}}\mathbb{P}(\mathcal{G})\mathbb{P}(I(x)|G(n,p)=\mathcal{G})}\cr 
&\leq\frac{\sum_{\mathcal{G}}\mathbb{P}(\mathcal{G})\left(\sum_{i=1}^{n}|S_{\mathcal{G}}(i)|^2\right)}{\sum_{\mathcal{G}}\mathbb{P}(\mathcal{G})\left(\sum_{i\neq j}|S_{\mathcal{G}}(i)||S_{\mathcal{G}}(j)|\right)}\cr 
&=\frac{\sum\limits_{\mathcal{G}\in\mathcal{F}}\mathbb{P}(\mathcal{G})\!\left(\sum\limits_{i=1}^{n}|S_{\mathcal{G}}(i)|^2\right)\!\!+\!\!\!\!\sum\limits_{\mathcal{G}\in\mathcal{F}^c}\mathbb{P}(\mathcal{G})\!\left(\sum\limits_{i=1}^{n}|S_{\mathcal{G}}(i)|^2\right)}{\sum\limits_{\mathcal{G}}\mathbb{P}(\mathcal{G})\!\left(\sum\limits_{i\neq j}|S_{\mathcal{G}}(i)||S_{\mathcal{G}}(j)|\right)}\cr 
&\leq \frac{\sum\limits_{\mathcal{G}\in\mathcal{F}}\mathbb{P}(\mathcal{G})\!\left(\sum\limits_{i=1}^{n}|S_{\mathcal{G}}(i)|^2\right)+n^2\mathbb{P}\Big(\mathcal{G}\in \mathcal{F}^c\Big)}{\sum\limits_{\mathcal{G}\in\mathcal{F} }\mathbb{P}(\mathcal{G})\!\left(\sum\limits_{i\neq j}|S_{\mathcal{G}}(i)||S_{\mathcal{G}}(j)|\right)},
\end{align}
where the second inequality holds because $\sum_{i=1}^{n}|S_{\mathcal{G}}(i)|^2\leq n^2$. On the other hand, we have 
\begin{align}\label{eq:sphere-to-ball}
\sum\limits_{i\neq j}|S_{\mathcal{G}}(i)||S_{\mathcal{G}}(j)|&=2 \sum\limits_{i=1}^{n}|S_{\mathcal{G}}(i)|\sum\limits_{j\leq i-1}|S_{\mathcal{G}}(j)|\cr 
&=2\sum\limits_{i=1}^{n}|S_{\mathcal{G}}(i)||B_{\mathcal{G}}(i-1)|.
\end{align} 

Using \eqref{eq:sphere-to-ball} in \eqref{eq:random-I-vs-I^c} we can write.

\begin{align}\label{eq:random-second-I-I^c}
&\frac{\mathbb{P}(I^c(x))}{\mathbb{P}(I(x))}\leq \frac{\sum\limits_{\mathcal{G}\in\mathcal{F}}\mathbb{P}(\mathcal{G})\!\left(\sum\limits_{i=1}^{n}|S_{\mathcal{G}}(i)|^2\right)+n^2\mathbb{P}\Big(\mathcal{G}\in \mathcal{F}^c\Big)}{2\sum\limits_{\mathcal{G}\in\mathcal{F} }\mathbb{P}(\mathcal{G})\!\sum\limits_{i=1}^{n}|S_{\mathcal{G}}(i)||B_{\mathcal{G}}(i-1)|}\cr 
&\qquad\leq \frac{\sum\limits_{\mathcal{G}\in\mathcal{F}}\mathbb{P}(\mathcal{G})\!\left(\sum\limits_{i=1}^{n}|S_{\mathcal{G}}(i)|^2\right)+n^2\mathbb{P}\Big(\mathcal{G}\in \mathcal{F}^c\Big)}{\frac{2}{\lambda}\sum\limits_{\mathcal{G}\in\mathcal{F} }\mathbb{P}(\mathcal{G})\!\left(\sum\limits_{i=1}^{n}|S_{\mathcal{G}}(i)|^2\right)}\cr
&\qquad=\frac{\lambda}{2}+\frac{n^2\lambda}{2}\frac{\mathbb{P}\Big(\mathcal{G}\in \mathcal{F}^c\Big)}{\sum\limits_{\mathcal{G}\in\mathcal{F} }\mathbb{P}(\mathcal{G})\!\left(\sum\limits_{i=1}^{n}|S_{\mathcal{G}}(i)|^2\right)}\cr 
&\qquad \leq\frac{\lambda}{2}+\frac{n^2\lambda}{2}\frac{\mathbb{P}\Big(\mathcal{G}\in \mathcal{F}^c\Big)}{\sum\limits_{\mathcal{G}\in\mathcal{F}\cap \mathcal{K} }\mathbb{P}(\mathcal{G})\!\left(\sum\limits_{i=1}^{n}|S_{\mathcal{G}}(i)|^2\right)}\cr 
&\qquad \leq\frac{\lambda}{2}+\frac{n^2\lambda}{2}\frac{\mathbb{P}\Big(\mathcal{G}\in \mathcal{F}^c\Big)}{\mathbb{P}\Big(\mathcal{G}\in \mathcal{F}\cap\mathcal{K}\Big)}, 
\end{align}
where in the second inequality we have used the definition of set $\mathcal{F}$, and in the last inequality we have used the definition of $\mathcal{K}$ to get $\sum\limits_{i=1}^{n}|S_{\mathcal{G}}(i)|^2\ge 1, \forall \mathcal{G}\in \mathcal{F}\cap\mathcal{K}$. On the other hand, by Lemma \ref{lemm:S-B-relation} and using the union bound, we can write,
\begin{align}\label{eq:probability-of-F}
\mathbb{P}(\mathcal{G}\in \mathcal{F}^c)&=\mathbb{P}(\exists i\in [n]: |S(i)|\ge \lambda |B(i-1)|)\cr 
&\leq \sum_{i}^{n}\mathbb{P}(|S(i)|\ge \lambda |B(i-1)|)\cr 
&\leq n^2e^{-\frac{(\lambda-(n-1)p)^2}{3(n-1)p}}.
\end{align} 

By choosing $\lambda=(1+\sqrt{15})np$ for sufficiently large $n$, and using \eqref{eq:probability-of-F}, each of the probabilities $\mathbb{P}(\mathcal{G}\in \mathcal{F})$ and $\mathbb{P}(\mathcal{G}\in \mathcal{K})$ can be made arbitrarily close to 1. Therefore, for sufficiently large $n$, we have $\mathbb{P}(\mathcal{G}\in\mathcal{F}\cap \mathcal{K})\ge \frac{1}{2}$. Hence, substituting \eqref{eq:probability-of-F} in \eqref{eq:random-second-I-I^c} and using $\mathbb{P}(\mathcal{G}\in\mathcal{F}\cap \mathcal{K})\ge \frac{1}{2}$, we get 
\begin{align}\nonumber
\frac{\mathbb{P}(I^c(x))}{\mathbb{P}(I(x))}&\leq  \frac{\lambda}{2}+ \lambda n^4e^{-\frac{(\lambda-(n-1)p)^2}{3(n-1)p}}\cr 
&=\frac{1\!+\!\sqrt{15}}{2}np\!+\!(1\!+\!\sqrt{15})n^5pe^{-\frac{((1+\sqrt{15})np-(n-1)p)^2}{3(n-1)p}}\cr 
&\leq \frac{1+\sqrt{15}}{2}np+(1+\sqrt{15})n^5p e^{-\frac{15\ln n}{3}}\cr 
&= \frac{1+\sqrt{15}}{2}np+(1+\sqrt{15})p, 
\end{align}
where the second inequality holds because $p\ge \frac{\ln n}{n}$. Finally, since $\mathbb{P}(I^c(x))=1-\mathbb{P}(I(x))$, we get $\mathbb{P}(I(x))\ge \frac{2}{2+(2+2\sqrt{15})p+(1+\sqrt{15})np}$. Now because of the symmetry between players we can write, 
\begin{align}\nonumber
\mathbb{E}[U_A]&=\mathbb{E}[U_B]=\frac{1}{2}\mathbb{E}[U_A+U_B]=\frac{1}{2}\sum_x P(I(x))\cr 
&\ge \frac{n}{2+(2+2\sqrt{15})p+(1+\sqrt{15})np}.
\end{align} 
Therefore, for sufficiently large $n$, we have $\mathbb{E}[U_A]\ge \frac{1}{5p}$. 

\smallskip
Now for two arbitrary but fixed nodes $a$ and $b$, let us assume that players $A$ and $B$ place their seeds at $a$ and $b$, respectively. We introduce a martingale by setting $X_0=\mathbb{E}[U_A(a,b)]$, and $X_i=\mathbb{E}[U_A(a,b)|v_1=a, v_2=b, v_3,v_4,\ldots,v_i]$ to be the expected utility of player $A$ after the first $i$ nodes of the random graph $G(n,p)$ are exposed. In other words, to find $X_i$ we expose the first $i$ vertices and all their internal edges and take the conditional expectation of $U_A$ with that partial information. It is straight forward to check that this defines a martingale of length at most $n$ such that $|X_{i+1}-X_i|\leq 1$, as adding one vertex can at most change the utility of player $A$ by at most 1. Therefore, using Azumas' inequality (Lemma \ref{lemm:Azuma}) we can write
\begin{align}\nonumber
\mathbb{P}\Big(|U_A-\mathbb{E}[U_A]|\ge \sqrt{n}\theta\Big)\leq 2e^{-\frac{\theta^2}{2}}.
\end{align} 
In particular, since $\mathbb{E}[U_A]\ge \frac{1}{5p}$, we can write
\begin{align}\nonumber
&\mathbb{P}\Big(\left|\frac{U_A}{\mathbb{E}[U_A]}-1\right|\ge 5p\sqrt{n}\theta\Big)\cr 
&\qquad\leq\mathbb{P}\Big(\left|\frac{U_A}{\mathbb{E}[U_A]}-1\right|\ge\frac{\sqrt{n}\theta}{\mathbb{E}[U_A]}\Big)\leq 2e^{-\frac{\theta^2}{2}}. 
\end{align}
By choosing $\theta=\ln n$ and since $p\in(\frac{\ln n}{n}, \sqrt{\frac{\ln n}{n^{1+\alpha}}})$, one can see that for any $\epsilon>0$, there exists a sufficiently large $n(\epsilon)$ such that for $n\ge n(\epsilon)$, we have $\mathbb{P}\Big(\left|\frac{U_A}{\mathbb{E}[U_A]}-1\right|\ge\epsilon\Big)\leq \epsilon$. By the same argument and by symmetry, we can see that for sufficiently large $n$, $U_B$ is arbitrarily close to its mean. Since $\mathbb{E}[U_B]=\mathbb{E}[U_A]$, we can write,
\begin{align}\nonumber
&\mathbb{P}\left(\left|\frac{U_A}{\mathbb{E}[U_A]}\!-\!1\right|\!<\!\epsilon,\ \left|\frac{U_B}{\mathbb{E}[U_B]}\!-\!1\right|\!<\!\epsilon \right)\cr &\qquad=\!1\!-\!\mathbb{P}\left(\left|\frac{U_A}{\mathbb{E}[U_A]}\!-\!1\right|\!\ge\!\epsilon, \mbox{or} \left|\frac{U_B}{\mathbb{E}[U_B]}\!-\!1\right|\!\ge\!\epsilon\right)\cr 
&\qquad\ge 1-\mathbb{P}\left(\left|\frac{U_A}{\mathbb{E}[U_A]}\!-\!1\right|\!\ge\!\epsilon\right)\!-\!\mathbb{P}\left(\left|\frac{U_B}{\mathbb{E}[U_B]}\!-\!1\right|\!\ge\!\epsilon\right)\cr 
&\qquad\ge 1-2\epsilon.
\end{align} 
This completes the proof.
\end{proof}   

\smallskip
As we close this section, we stress the fact that the subset of nodes
which adopt either of the two types in diffusion games can be viewed as a community whose
members have closer interactions with each other. In other words, in the diffusion game
each community can be viewed as the final subset of nodes that adopt a specific technology,
which raises the question of efficient decomposition of the network into different communities.
In such problems, the main issue is to partition the set of nodes into different groups such
that the set of edges within each group is much larger than that between the groups. Different approaches in order to determine the communities effectively so that they scale with the parameters of the network under both deterministic and randomized setting have been proposed in the literature \cite{hajek2014computational}, \cite{newman2004detecting}. As an example, an electrical voltage-based approach in order to determine the communities within a network such that they scale linearly with the size of the network has been discussed in \cite{Huberman2003communities}.

\bigskip
\section{Conclusion}\label{sec:conclusion}
In this paper, we have studied a class of games known as diffusion games which model the competitive behavior of a set of social actors on an undirected connected social network. We determined the set of pure-strategy Nash equilibria for two special but well-studied classes of networks. We showed
that, in general, making a decision on the existence of Nash equilibrium for such a class of games is an NP-hard problem. Further, we have presented some necessary conditions for a given profile to be an equilibrium in general graphs. Finally, we have studied the behavior of the competitive diffusion game over Erdos-Renyi graphs, obtained some concentration results, and derived lower bounds for the expected utilities of the players over such random structures.      

As a future direction of research, an interesting problem is identifying the class of networks which admit Nash equilibria for the case of two players in the competitive diffusion game. It is not hard to see that a tree construction which is a special case of bipartite graphs always leads to a pure-strategy equilibrium \cite{small2013nash} for the case of two players. Finally, studying a more robust model of the diffusion games with respect to changes in the network topology is another important problem. 

%%%%%%%%%%%%%%%%%%%%%%%%%%%%%%%%%%%%%%%%%%%%%%%%%%%%%%%%%%%%%%%%%%%%%%%%%%%%%%%%%%%%%%%%%%%%%%%%%%%%%%%%%%%%    

\bigskip       
\textbf{\textit{Acknowledgement}}:
We would like to thank Esther Galbrun for sharing with us a counterexample regarding the sub-modular property of utility functions in the diffusion game problem (Example \ref{ex:counter-example}). In the earlier version of this work, listed as \cite{etesami2014complexity}, we had erroneously argued that the players' utilities in the diffusion game model is a sub-modular function of their initial seed placements. Moreover, we would like to thank Vangelis Markakis and Vasileios Tzoumas for bringing to our attention their relevant work on the competitive diffusion game \cite{tzoumas2012game}. 

\bibliographystyle{IEEEtran}
\bibliography{thesisrefs}

%%%%%%%%%%%%%%%%%%%%%%%%%%%%%%%%%%%%%%%%%%%%%%%%%%%%%%%%%%%%%%%%%%%%%%%%%%%%%%%%%%%%%%%%%%%%%%%%%%%%%%%%%%%
\appendices
\smallskip
\section{Social Welfare and sub-modularity in the Competitive Diffusion Game}\label{ap-social-welfare-sub-modularity}

\smallskip
In this appendix we first study the maximum social welfare of the game which can be achieved by both players in the competitive diffusion game with a single seed placement. The motivation for such a study is that any bound which is obtained for the optimal social welfare can be used to provide some bound on the price of anarchy of the game, which measures the degradation in the efficiency of a system due to selfish behavior of its agents, and is defined to be the ratio of the centralized optimal social welfare over the sum utilities of the worst equilibrium. Following that, we study the sub-modular property in the competitive diffusion game. It was shown in the literature that greedy algorithm optimization approaches work quite well for the dynamics which benefit from sub-modular property \cite{mossel2010submodularity}. In this appendix we show that unlike some other diffusion processes \cite{kempe2003maximizing}, the utilities of the players in the competitive diffusion game do not have the sub-modular property.

We start with the following definition:

\smallskip
\begin{definition}
Given an arbitrary nonnegative matrix $P$, we define its zero pattern $\sigma(P)$ to be
\begin{align}\nonumber
\sigma(P)_{ij}=\begin{cases} 1, & P_{ij}>0 \\
                           0, & P_{ij}=0.
\end{cases}
\end{align}  
\end{definition} 

\smallskip
\begin{theorem}\label{thm:upper-social-welfare}
Given a graph $\mathcal{G}=(V,\mathcal{E})$ of $n$ nodes and diameter $D$, and two players $A$ and $B$, there exists an initial seed placement $(a, b)$ for players such that the social utility $U_A(a,b)+U_B(a,b)$ is at least 
\begin{align}\nonumber
n+1-\frac{\sum_{k=1}^{D}\big\|\big(\sigma(\mathcal{A}^k)-\sigma(\mathcal{A}^{k-1})\big)\bold{1}\big\|^2}{n(n-1)}, 
\end{align} 
where $\|\cdot\|$ denotes the standard Euclidean norm, and $\mathcal{A}=I+\mathcal{A}_{\mathcal{G}}$ where $\mathcal{A}_{\mathcal{G}}$ is the adjacency matrix of the network $\mathcal{G}$.   
\end{theorem}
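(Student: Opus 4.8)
The plan is to reduce the lower bound on the optimal social welfare to a counting argument over equidistant nodes, and then to produce a good seed placement by averaging. First I would observe that for any placement $(a,b)$ the total social utility equals the number of infected nodes, i.e. $U_A(a,b)+U_B(a,b)=n-|\mathcal{N}|$, where $\mathcal{N}$ is the set of gray or white nodes. By Lemma \ref{lemma:gray-eq} we have $\mathcal{N}\subseteq\{i: d_{\mathcal{G}}(a,i)=d_{\mathcal{G}}(b,i)\}$, so that
\[
U_A(a,b)+U_B(a,b)\ge n-\big|\{i: d_{\mathcal{G}}(a,i)=d_{\mathcal{G}}(b,i)\}\big|.
\]
Thus it suffices to exhibit a pair $(a,b)$ for which the number of nodes equidistant from $a$ and $b$ is small, and since the statement only asserts existence, an averaging argument will suffice.

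To find such a pair I would average the right-hand side over all $n(n-1)$ ordered pairs $(a,b)$ with $a\neq b$. Swapping the order of summation, the mean number of equidistant nodes equals $\frac{1}{n(n-1)}\sum_i\big|\{(a,b): a\neq b,\ d_{\mathcal{G}}(a,i)=d_{\mathcal{G}}(b,i)\}\big|$. For a fixed node $i$, writing $N_k(i)$ for the number of nodes at distance exactly $k$ from $i$, the number of ordered pairs lying at a common distance from $i$ (including $a=b$) is $\sum_k N_k(i)^2$; subtracting the $n$ diagonal pairs gives $\sum_k N_k(i)^2-n$. Hence the expected count of equidistant nodes is $\frac{1}{n(n-1)}\big(\sum_i\sum_k N_k(i)^2-n^2\big)$, and because some placement attains at least the average social welfare, the task reduces to identifying $\sum_i\sum_k N_k(i)^2$ with the matrix quantity in the statement.

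The crux is the interpretation of $\sigma(\mathcal{A}^k)$. Since $\mathcal{A}=I+\mathcal{A}_{\mathcal{G}}$ is nonnegative with $\mathcal{A}_{ij}>0$ exactly when $d_{\mathcal{G}}(i,j)\le 1$, one has $(\mathcal{A}^k)_{ij}>0$ precisely when $d_{\mathcal{G}}(i,j)\le k$, so $\sigma(\mathcal{A}^k)$ is the indicator matrix of the balls of radius $k$. Consequently $\sigma(\mathcal{A}^k)-\sigma(\mathcal{A}^{k-1})$ is exactly the distance-$k$ shell indicator matrix, the vector $(\sigma(\mathcal{A}^k)-\sigma(\mathcal{A}^{k-1}))\bold{1}$ has $i$-th entry $N_k(i)$, and therefore $\big\|(\sigma(\mathcal{A}^k)-\sigma(\mathcal{A}^{k-1}))\bold{1}\big\|^2=\sum_i N_k(i)^2$. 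Summing over $k=1,\dots,D$ (ranges $k>D$ contribute nothing, and the $k=0$ term equals $n$ since $N_0(i)=1$), I obtain $\sum_i\sum_k N_k(i)^2=n+\sum_{k=1}^{D}\big\|(\sigma(\mathcal{A}^k)-\sigma(\mathcal{A}^{k-1}))\bold{1}\big\|^2$. Substituting into the averaged bound and simplifying, where the factor $n(n-1)$ cancels against $n^2-1=(n-1)(n+1)$, yields the claimed value $n+1-\frac{\sum_{k=1}^{D}\|(\sigma(\mathcal{A}^k)-\sigma(\mathcal{A}^{k-1}))\bold{1}\|^2}{n(n-1)}$.

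I expect the main obstacle to be the clean algebraic identification in the last paragraph: recognizing that $\sigma(\mathcal{A}^k)-\sigma(\mathcal{A}^{k-1})$ encodes precisely the distance-$k$ shells, and then carrying out the telescoping so that the arithmetic collapses exactly to $n+1$ minus the normalized sum. The probabilistic averaging step itself is routine once the reduction via Lemma \ref{lemma:gray-eq} and the double-counting identity $\sum_{a\neq b}|\{i:d_{\mathcal{G}}(a,i)=d_{\mathcal{G}}(b,i)\}|=\sum_i(\sum_k N_k(i)^2-n)$ are in place.
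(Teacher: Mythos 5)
Your argument is correct and follows essentially the same route as the paper's own proof: average the social welfare over all seed placements, bound the non-infected nodes by the equidistant nodes via Lemma \ref{lemma:gray-eq}, and identify the shell sizes $N_k(i)$ with the row sums of $\sigma(\mathcal{A}^k)-\sigma(\mathcal{A}^{k-1})$ — the paper merely phrases the count with unordered pairs $\{i,j\}$ and an auxiliary $0$--$1$ array $Q$, whereas you use ordered pairs. One cosmetic correction: the final cancellation comes from $n^2-n=n(n-1)$ (the $-n^2$ diagonal term combined with the $k=0$ contribution of $n$), not from $n^2-1=(n-1)(n+1)$ as you wrote, but the displayed bound you obtain is exactly the claimed one.
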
 
\begin{proof}
Let us define $\mathcal{G}^{(k)}=(V, \mathcal{E}^{(k)})$, where $\mathcal{E}^{(k)}=\{(i,j)| d_{\mathcal{G}}(i,j)=k\}$.
We consider all the initial placements over different pairs of nodes, and then compute the average utility gained by players. To do that, we consider an array $Q$ of $n \choose 2$ rows and $n$ different columns. For $i\neq j, k\in \{1,2,\ldots n\}$, we let $Q(\{i,j\},k)=1$ if and only if node $k$ adopts either $A$ or $B$ during the diffusion process for the initial placement $\{i, j\}$, and $Q(\{i,j\},k)=0$, otherwise (Figure \ref{fig:matrix}). We count the maximum number of zeros in $Q$. For an arbitrary but fixed node $x\in V=\{1,2,\ldots,n\}$, we count the number of different initial seed placements which result in node $x$ turning to gray. In other words, we count the maximum number of zeros in column $x$ of $Q$. For this purpose we note that, using Lemma \ref{lemma:gray-eq}, if node $x$ turns to gray, it must be at equal distance from seed nodes. On the other hand, for a given $k=1, 2, \ldots, D$, the number of choosing two nodes at distance $k$ from node $x$ (as possible seed placements which may turn $x$ to gray) is the same as the number of choosing two nodes among neighbors of $x$ in $\mathcal{G}^{(k)}$, i.e. $d_{\mathcal{G}^{(k)}}(x) \choose 2$, where $d_{\mathcal{G}^{(k)}}(x)$ denotes the degree of node $x$ in $\mathcal{G}^{(k)}$. Thus, the maximum number of zeros in column $x$ of $Q$ is upper bounded by $\sum_{k=1}^{D}{d_{\mathcal{G}^{(k)}}(x) \choose 2}$ and hence the number of zeros in $Q$ is bounded from above by $\sum_{x \in V}\sum_{k=1}^{D}{d_{\mathcal{G}^{(k)}}(x) \choose 2}$. This means that the average number of ones in each row of $Q$ is at least 

\begin{figure}[htb]
\vspace{-4.8cm}
\begin{center}
\includegraphics[totalheight=.35\textheight,
width=.45\textwidth,viewport=-90 0 310 400]{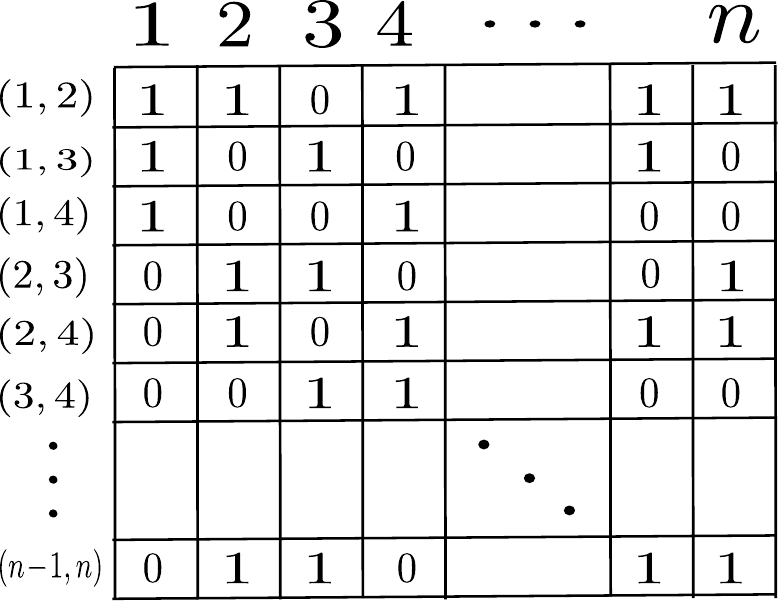}
\end{center}\vspace{-0.2cm}\caption{Illustration of the array $Q$ in Theorem \ref{thm:upper-social-welfare}.}\label{fig:matrix}
\end{figure}
\vspace{-0.3cm}
\begin{align}\label{eq:Social-welfare-bound}
&\frac{n{n \choose 2}\!-\!\sum_{x \in V}\sum_{k=1}^{D}\!{d_{\mathcal{G}^{(k)}}(x) \choose 2}}{{n \choose 2}}\!=\!n\!-\!\frac{\sum_{k=1}^{D}\sum_{x \in V}\!{d_{\mathcal{G}^{(k)}}(x) \choose 2}}{{n \choose 2}}\cr&\qquad=n+\frac{\sum_{k=1}^{D}\sum_{x \in V}d_{\mathcal{G}^{(k)}}(x)}{n(n-1)}\!-\!\frac{\sum_{k=1}^{D}\sum_{x \in V}{d_{\mathcal{G}^{(k)}}^2(x)}}{n(n-1)}\cr &\qquad=n+\frac{\sum_{k=1}^{D}2|\mathcal{E}^{(k)}|}{n(n-1)}-\frac{\sum_{k=1}^{D}\sum_{x \in V}{d_{\mathcal{G}^{(k)}}^2(x)}}{n(n-1)}\cr &\qquad=n+1-\frac{\sum_{k=1}^{D}\sum_{x \in V}{d_{\mathcal{G}^{(k)}}^2(x)}}{n(n-1)},
\end{align}
where the last equality follows because $\{\mathcal{E}^{(k)}\}_{k=1}^{D}$ partitions all the edges of a complete graph with $n$ nodes. This yields a lower bound on the maximum social welfare for the case of two players on the graph. 

Finally, using the zero pattern definition of a matrix, we can compute the last quantity in \eqref{eq:Social-welfare-bound} in the following way. 
Let us take $A_{\mathcal{G}}$ to be the adjacency matrix of graph $\mathcal{G}$ of diameter $D$ and $\mathcal{A}=I+\mathcal{A}_{\mathcal{G}}$, where $I$ denotes the identity matrix of appropriate dimension. It is not hard to see that $\sigma(\mathcal{A}^k)-\sigma(\mathcal{A}^{k-1})$ is the adjacency matrix of $\mathcal{G}^{(k)}$. In other words, $\mathcal{A}_{\mathcal{G}^{(k)}}=\sigma(\mathcal{A}^k)-\sigma(\mathcal{A}^{k-1})$. Therefore, if we let $\bold{1}$ be the column vector of all ones, the degree of each node $x$ in $\mathcal{G}^{(k)}$ can be found easily by looking at the $x$ coordinate of vector $\big[\sigma(\mathcal{A}^k)-\sigma(\mathcal{A}^{k-1})\big]\bold{1}$. Thus we can write
\begin{align}\nonumber
&\sum_{x\in V}\sum_{k=1}^{D}d_{\mathcal{G}^{(k)}}^2(x)\cr&\qquad=\sum_{k=1}^{D}\bold{1}'\big[\sigma(\mathcal{A}^k)-\sigma(\mathcal{A}^{k-1})\big]'\big[\sigma(\mathcal{A}^k)-\sigma(\mathcal{A}^{k-1})\big]\bold{1}\cr 
&\qquad=\sum_{k=1}^{D}\big\|\left(\sigma(\mathcal{A}^k)-\sigma(\mathcal{A}^{k-1})\right)\bold{1}\big\|^2. 
\end{align}         
\end{proof}

Note that the total number of operations needed to compute the expression given in Theorem \ref{thm:upper-social-welfare} is at most polynomial in terms of the number of nodes in the graph. 

\smallskip
\begin{definition}
Given a set $\Omega$, a set function $f:2^{\Omega}\rightarrow \mathbb{R}$ is called a sub-modular function if for any two subsets $S,\bar{S}\subseteq \Omega$ with $S\subseteq \bar{S}$ and any $x\in \Omega\setminus \bar{S}$, we have 
\begin{align}\nonumber
f(S\cup \{x\})-f(S)\ge f(\bar{S}\cup \{x\})-f(\bar{S}).
\end{align}
\end{definition}

\smallskip
In fact, the following example which is due to Esther Galbrun shows that the sub-modular property does not hold for the diffusion game model introduced in Section \ref{sec:diffusion-game-model}.
\smallskip
\begin{example}\label{ex:counter-example}[\textit{A Counterexample by Esther Galbrun}]
In this example, there are two players red (r) and blue (b). Circled nodes are initial seeds and the graph shows the color of the nodes at the end of the diffusion. In all cases blue only picks $v_3$. Moreover, $S=\emptyset\subseteq \{v_1\}=\bar{S}$ and $x=\{v_4\}$. As it can be seen in Figure \ref{fig:counter-example-sub-modular}, the utility function of the red player does not satisfy sub-modular property.

\begin{figure}[htb]
\vspace{-2cm}
\begin{center}
\includegraphics[totalheight=.25\textheight,
width=.35\textwidth,viewport=100 0 800 700]{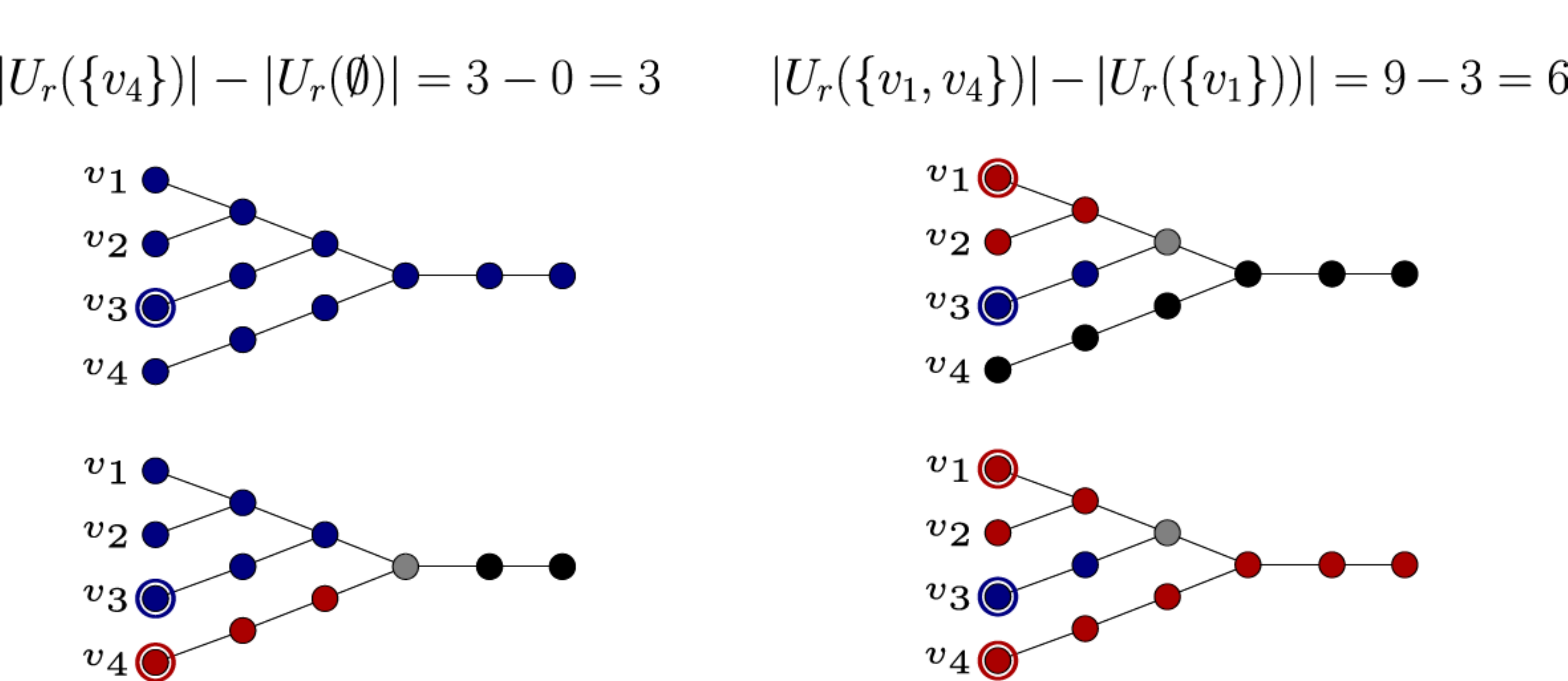}
\label{Fig:U-EQ}
\end{center}\caption{A counterexample for the existence of sub-modular property in the diffusion game}\label{fig:counter-example-sub-modular}
\end{figure}
\end{example}

\end{document}